\documentclass[twocolumn]{autart}    

\usepackage{graphicx}      

\usepackage{amsfonts}
\usepackage{amsmath,amssymb}
\usepackage{cite}
\usepackage{color}
\usepackage{enumitem}

\newtheorem{theorem}{Theorem}
\newtheorem{lemma}{Lemma}

\newtheorem{corollary}{Corollary}
\newtheorem{example}{Example}

\newenvironment{proof}[1][Proof]{%
  \par\noindent\textit{#1.} \ignorespaces%
}{%
  \hspace*{\fill}$\square$\medskip%
}

\usepackage{xcolor}

 \usepackage{mathptmx}

 \allowdisplaybreaks
\begin{document}

\begin{frontmatter}

\title{
Contractor-Expander and Universal Inverse Optimal\\ Positive Nonlinear Control }
  
\thanks[footnoteinfo]{This paper was not presented at any IFAC
meeting.}

\author[UCSD]{Miroslav Krstic}\ead{mkrstic@ucsd.edu}

\address[UCSD]{Department of Mechanical and Aerospace Engineering, University of California at San Diego, La Jolla, CA, 92093-0411 USA}

\begin{abstract}
For general control-affine nonlinear systems in the positive orthant, and with positive controls, we show how strict CLFs can be utilized for inverse optimal stabilization. Conventional ``LgV'' inverse optimal feedback laws, for systems with unconstrained states and controls, assume sign-unconstrained inputs and input penalties that are class-K in the input magnitude, hence symmetric about zero. Such techniques do not extend to positive-state-and-control systems. Major customizations are needed, and introduced in this paper, for positive systems where highly asymmetric (or unconventionally symmetric) costs not only on the state but also on control are necessary. With the predator-prey positive-state positive-input benchmark system as inspiration, using a strict CLF built in our previous paper, we prototype two general inverse optimal  methodological frameworks that employ particular ``contractor and expander functions.'' One framework (A) employs a triple consisting of a CLF, a stabilizing feedback, and an expander, whereas the other framework (B) employs a pair of a CLF and a contractor function. Both frameworks yield inverse optimal stabilizer constructions, on positive orthants of arbitrary dimensions. A stronger construction results from a stronger CLF condition.
Biological interpretation for the predator-prey model illuminates that such inverse optimal control constructions  are bio-ecologically meaningful. In addition to general frameworks, we present two fully explicit designs: two Sontag-like universal formulae for stabilization of positive-orthant systems by positive feedback, one of them with inverse optimality. 
\end{abstract}

\end{frontmatter}

\allowdisplaybreaks





 \date{}

\section{Introduction}

In this paper, for which we find little technical precedent in the literature, we solve the {\em inverse optimal} stabilization problems for control-affine nonlinear systems with positive states and controls, given by
\begin{equation}
\label{eq-xi-sys0}
\dot\xi = f(\xi) + g(\xi)\,\omega, \qquad \xi \in (0,\infty)^n,
\end{equation}
with $\omega \in (0,\infty)$ and equilibrium 
\begin{equation}
\xi_{\rm e} = \mathbf{1}:=[1, 1,\ldots,1]^{\rm T}\,, \quad\omega_{\rm e} = 1
\end{equation}
\begin{equation}
f(\mathbf{1}) + g(\mathbf{1}) = 0\,.
\end{equation}
The equilibrium pair $\xi_{\rm e} = \mathbf{1}$,  $\omega_{\rm e} = 1$ in the system \eqref{eq-xi-sys0} is taken without loss of generality (using componentwise state rescaling), in the same sense that $x=0, u=0$ would be taken without loss of generality for conventional Euclidean-state-space systems with real-valued inputs, $\dot x = f(x) + g(x) u, x\in\mathbb{R}^n, u\in \mathbb{R}$. 

Motivated by pedagogy and to  ease the reader into novel concepts, we spend a considerable portion of the first half of the paper on a particular member within the class of system \eqref{eq-xi-sys0}, i.e.,  the predator-prey benchmark system
\begin{subequations}
\label{pp-sf}
\begin{eqnarray}
\label{pp-sf1}
\dot X &=& (1-Y)X
\\
\label{pp-sf2}
\dot Y &=& (X-U)Y
\end{eqnarray}
\end{subequations}
where $X$ is the prey concentration, $Y$ the predator concentration, and $U$ the rate of harvesting of the predator. 

The predator-prey model \eqref{pp-sf}, though stripped of locally destabilizing terms that may induce 
limit-cycling oscillations between the predator and prey concentrations, is a clean and challenging benchmark for control of positive
\eqref{eq-xi-sys0}. In \cite{KrsticAsymmetryDemystified} we designed a globally asymptotically stabilizing positive feedback and a strict CLF for \eqref{pp-sf}. In this paper, with a strict CLF in hand, we progress to the more challenging task of {\em inverse optimal} {\bf\em positive} {\em stabilization}, not tackled before. 

\vspace{-4mm}\paragraph*{$L_gV$ controllers and inverse optimality with sign-unconstrained feedback.} The basic result on inverse-optimal stabilization for control-affine nonlinear systems 
\begin{equation}
\label{eq-xdot=f+gu}
\dot x = f(x) + g(x) u\,, \quad f(0)=0\,,
\end{equation}
with a scalar input goes as follows, by generalizing results from \cite{Krstic1998Stabilization}. Given a CLF $V(x)$, if feedback of the form 
\begin{equation}
u_0 = - \frac{\ell\gamma\left(|L_gV(x)|/\sqrt{r(x)}\right)}{L_gV(x)}\,,    
\end{equation}
where $\ell$ denotes the Legendre-Fenchel transformation (the anti-derivative of the inverse function of the derivative of a function), is stabilizing with respect to $V$ for some strictly positive function $r(x)$ and some $C^1$ class $\mathcal{K}_\infty$ function $\gamma$, whose derivative is also $\mathcal{K}_\infty$, then the ``fortified'' feedback
\begin{eqnarray}
\label{eq-inv-opt-class-u*}
u^* &=& -{\rm sgn}(L_gV) \frac{(\gamma')^{-1}\left(|L_gV(x)|/\sqrt{r(x)}\right)}{\sqrt{r(x)}}
\\
\label{eq-inv-opt-class-u*-altern}
&=& {\rm sgn}(L_gV(x)) \frac{d\left(u_0\  L_gV(x)/\sqrt{r(x)}\right)}{d\left(|L_gV(x)|/\sqrt{r(x)}\right)}
\\
\label{eq-inv-opt-class-u*-altern2}
& =& (\ell\gamma)' \circ \left(\frac{\ell\gamma}{\mathrm{Id}}\right)^{-1}\left(u_0\right)\,,
\end{eqnarray}
where \eqref{eq-inv-opt-class-u*-altern} and \eqref{eq-inv-opt-class-u*-altern2} are forms not previously reported in the literature, is the minimizer of the meaningful cost
\begin{align}
J = \int_0^\infty &\Big[\underbrace{\ell \gamma\left(|L_gV(x)|/\sqrt{r(x)}\right) -L_fV}_{> 0 \ \forall x\neq 0} 
\nonumber\\[2mm]
&+ \gamma\left(\sqrt{r(x)} |u|\right) \Big] dt\,.
\end{align}
If $\gamma(s) = s^2/4$, then 
\begin{equation}
u^*= 2 u_0 = - 2 \frac{1}{r}L_gV\,,
\end{equation}  
has the sign of $-L_gV$ and the cost on control is $r(x)u^2/4$, which is symmetric about $u=0$. 

Such an approach, with sign-indefinite feedback $u^*\gtrless 0$ in \eqref{eq-inv-opt-class-u*}, is clearly inapplicable to positive systems, for which 
costs on both control and the state {\em highly asymmetric} relative to the equilibrium are needed. 

\vspace{-4mm}\paragraph*{Application domains of positive systems and some theory driven by them.} Systems with positive states and inputs arise in many concrete settings where both the quantities being modeled and the actuation mechanisms are intrinsically nonnegative. In ecology and resource management, predator–prey dynamics, fisheries harvesting, and invasive-species control involve population levels and harvesting or stocking rates that cannot be negative. In epidemiology, SIR-type infection models use vaccination, testing, or treatment rates as nonnegative control inputs acting on positive compartments. In chemical and biochemical reactors, species concentrations are positive and manipulated through feed rates or catalyst injections that can only add material. In pharmacokinetics, drug amounts in body compartments are regulated via infusion rates that are strictly nonnegative. In energy and infrastructure systems, reservoir storage levels are controlled by nonnegative inflow rates, and battery charge states are influenced by charging currents constrained to be positive. In traffic and queueing networks, vehicle densities and queue lengths are regulated through metering or service rates that cannot take negative values. In economic growth and capital accumulation models, capital stocks are positive and driven by nonnegative investment rates. These applications share the state and control positivity, making positivity preservation and asymmetric control penalties unavoidable.

Such applications have driven the development of  feedback analysis and designs that respect positivity. In ecological and biochemical models, monotonicity and positivity preservation have motivated the use of monotone control system theory and co-positive Lyapunov functions to guarantee global stability within the positive orthant \cite{AngeliSontag2003,BlanchiniColaneriValcher2020}. In large-scale and networked settings, the  positivity of flows and buffer contents has led to scalable analysis and synthesis tools 
that exploit Perron–Frobenius structure and enable distributed control 
\cite{Rantzer2015}. 
Traffic density control  with nonnegative metering rates has resulted in locally optimal feedbacks that enforce positivity while optimizing throughput \cite{PapageorgiouHajSlSalemBlosseville1991,PapageorgiouReview2018}. In energy and chemical process control, positivity of states and inputs underpins irreversible port-Hamiltonian modeling and thermodynamically consistent stabilization  \cite{RamirezVanDerSchaftMaschke2013}. 

\vspace{-4mm}\paragraph*{Literature relevant to our focus --- positive stabilizing feedback.} Sontag's universal formula for stabilization is the quintessential inverse optimal stabilizer. Two universal formulae exist that are the most relevant to our work---one applicable with positive controls \cite{LinSontag1995} and one for  positive states \cite{Steentjes2018Feedback}, but neither simultaneously. 

For stabilization by {\em positive feedback}, a universal formula was introduced in \cite{LinSontag1995}. However, even though  a logarithmic coordinate change  transforms the positive predator-prey into $\mathbb{R}^2$, covered by \cite{LinSontag1995}, such a formula is not applicable to the CLF here. It is admissible  for stabilization only by overharvesting relative to the equilibrium harvesting rate. Alas, for the predator-prey CLF, there is a triangular set of  predator and prey concentrations (shown later as \eqref{eq-S+})), 
at which both $L_gV>0 , L_fV+L_gV>0$, when only overharvesting is admissible for control. Since positive {\em underharvesting} is also necessary for global stabilization, \cite{LinSontag1995} is inapplicable.

The other reference of considerable relevance to our objective here is \cite{Steentjes2018Feedback}, with a universal formula for stabilization of an equilibrium in the interior of the positive orthant.
But, when applied to our model, the formula from \cite{Steentjes2018Feedback} results in both positive and negative harvesting inputs. 


A pioneering paper on positive linear systems \cite{DeLeenheerAeyels2001} proves, using Brockett's condition, that, for Metzler linear systems with an  equilibrium internal to the positive orthant, asymptotic stabilization is {\em impossible by positive feedback} (by proving that the Perron-Frobenius eigenvalue cannot be moved by positive feedback), and then moves on to stabilization within the positive orthant using sign-unrestricted inputs. Positive stabilization of the nonlinear benchmark predator-prey model \eqref{pp-sf} is not in contradiction with that linear/Metzler result; the directions of the predator-prey drift vector field are not fixed in the positive orthant. 

Other literature  that relates the most closely to our paper is, first and foremost, the author's decade-old \cite{malisoff2016stabilization}, which considers a nearly identical model, but with a strictification that produces a very complex CLF, unappealing for a use in inverse optimal stabilization. Our very recent \cite{11080060} deals with stabilization of a different predator-prey model, where the simultaneous, non-discriminating harvesting of both the predator and the prey is inevitable. We forego here the study of the model with simultaneous harvesting because, while that model presents clear challenges, they are heretofore overcome by ad hoc feedback designs. It is the predator-prey model with single-species harvesting that has a potential for methodological generalizations,  to competition systems and to food chain structures, both via backstepping. 

The inspiration for our work here also comes from the CLF strictification for the SIR dynamics in \cite{9740520}, since our predator-prey dynamics are essentially the (S,I) portion of the SIR, where I\,=\,predator  and S\,=\,prey.

\vspace{-4mm}\paragraph*{Generalizations of the predator-prey result.} The first half of this paper is occupied by designs  for the predator-prey benchmark. These designs not ad hoc. They are  a part of a general framework, extending from Euclidean to positive systems the $L_gV$-based inverse optimal control \cite{Krstic1998Stabilization}. 

While we could have presented the general result first, followed by a construction for the predator-prey model, the predator-prey design is thus far our only fully worked out realization of the paper's general frameworks. We find it, therefore, pedagogically advantageous, to proceed from a fully explicit example to a general framework, i.e., by ``induction,'' rather than ``deductively,'' 
which initially overwhelms, by abstraction. 

We accompany the introduction of our general inverse optimality framework with two concrete general constructions---universal formulae mirroring Sontag's for positive-orthant positive-input systems. Both the constructive and the universal results matter. Universal formulae require constructive designs to supply CLFs. And constructiveness is most clearly realized, conceptually if not practically, through a universal formula.

\vspace{-4mm}\paragraph*{\bf Results, contributions, and ideas introduced:}  
\begin{enumerate}
    \item {\em From strict CLF to inverse optimality.} The paper does not merely present another stabilizer for the predator–prey benchmark, but the first inverse-optimal characterization of a strict CLF for a positive
    system. Relative to our \cite{KrsticAsymmetryDemystified}, where global stabilization of \eqref{pp-sf} is achieved via a strict CLF and a specific backstepping feedback, the present paper establishes that this strict CLF is in fact the value function of an infinite-horizon optimal control problem with a state-dependent and inherently asymmetric penalty on harvesting. 
    \medskip
    \item {\em From optimization penalties symmetric around zero to asymmetric one-sided penalties.} In contrast to classical inverse optimal constructions \cite{Krstic1998Stabilization}, which assume sign-indefinite inputs and symmetric penalties about zero, and unlike the universal positive-input formulas of \cite{LinSontag1995} and the interior-positive equilibrium stabilizers of \cite{Steentjes2018Feedback}, the proposed construction operates under one-sided, unbounded control constraints and a nonzero equilibrium input, and yields a parametrized family of globally stabilizing inverse-optimal feedback laws. 
    \medskip
    \item {\em Expander-based parametrization of asymmetric inverse optimal redesigns.} The novelty lies in the $\Theta\circ\Sigma = \mbox{Id}$  contractor-expander mechanism, which redesigns a stabilizing feedback to make it inverse optimal. A contractor\footnote{We slightly misappropriate established terminology. Our  \emph{contractor} is a function that is not necessarily contractive globally for $s>1$ but is a contraction relative to $s=1$. Any attempt at full precision yields awkward terminology.}
$\Theta$ is a strictly increasing function on a positive real domain with a property that it produces an output smaller than its argument when the argument is larger than unity, and an output larger than one when the argument is smaller than unity.  An expander $\Sigma = \Theta^{-1}$, already lurking in \eqref{eq-inv-opt-class-u*-altern2} but not previously discovered as such, is a function that does the opposite; an inverse function of a contractor. Our functions $\Theta,\Sigma$ simultaneously preserve strict Lyapunov decrease, enforce positivity of control, and induce biologically meaningful asymmetric 
costs. 
    \medskip
    \item {\em Direct constructions of inverse optimal feedbacks with contractor.} A user ready to abandon a nominal feedback  that lacks an $L_gV$ structure and follow a recipe for a direct inverse optimal design, is given such a recipe in the paper. The recipe starts with a a choice of a contractor function in a broad family, and goes through the design of a positive weight-on-control function, which automatically results in a controller with a nonlinear $L_gV$ structure, which ensures inverse optimality. 
    \medskip
    \item {\em Two universal formulae for systems with positive states and controls.} In addition to a general inverse optimality framework, we also provide two universal formulas for positive-input stabilization in the positive orthant. In spite of an uncanny visual similarity with the original universal formula of Sontag \cite{sontag1989universal}, our formulas differ significantly because the domains of its arguments differ, due to the CLF conditions being significantly more restrictive for positive systems relative to unconstrained systems. Our problem with positive-input stabilization on the positive orthant differs significantly from the Lin-Sontag problem on $\mathbb{R}^n$ in \cite{LinSontag1995}, and so do, consequently, the two respective universal formulae. 
\end{enumerate}

\vspace{-4mm}\paragraph*{\bf Organization.} Section \ref{secII} recalls the predator–prey model and the strict CLF from \cite{KrsticAsymmetryDemystified},  including the exact identification of the set $\mathcal{S}_{+}$ on which universal positive-control formulas fail,  
introduces the expanded backstepping stabilizer $U = Y \Sigma(Y/X)$, and establishes global asymptotic stability. Section \ref{secIV} proposes the contractor function $\Theta$ and derives the associated control penalty $\Psi$. Section \ref{secV} proves inverse-optimality via Hamilton–Jacobi–Bellman characterization. Section \ref{secVII} clarifies the connection between the proposed positive inverse optimal feedback and the classical $-r^{-1}L_gV$ structure. Section \ref{secVI} interprets the resulting asymmetric control cost biologically and analytically. Section \ref{secIX} presents simulations illustrating optimality and the role of the contractor mechanism.  Section \ref{secX-} proposes a universal formula for stabilization, a positive-input positive-state analog of Sontag's formula. Sections \ref{secX} and \ref{secX+} give general inverse optimal design for affine-in-positive-control positive-state nonlinear systems. Section \ref{secVIII} explores Sections \ref{secX}'s parametrization of contractor functions and the induced family of control penalties. 

For a reader ready to jump straight into more general but more abstract results, it is entirely feasible to read Section \ref{secX-} first, followed by Sections \ref{secX} and \ref{secX+}, accompanied with the technical Section \ref{secIV}.

\section{Predator-Prey Benchmark:  Model, Backstepping CLF, and Controller Fortification by ``Expander''}
\label{secII}

That the system \eqref{pp-sf} is difficult to stabilize by positive feedback in the positive quadrant is seen by attempting feedback linearization, for example, in the variables $(\ln X, 1-Y)$, and noting that the resulting control $U$ is negative for $Y<1 +\frac{k_1 \ln X - X}{k_2+X}$, where $k_1, k_2>0$. Additionally, even with such a controller, which lapses into negative values, the state still has an excursion out of the positive quadrant. For $k_1=2, k_2=3$, this happens for 
$\bigl(Y_0-1-2\ln X_0\bigr)^2\ge 8\bigl(Y_0-1-\ln X_0\bigr)$ and either $0<X_0\le 1, Y_0> 1+\tfrac{2}{3}\ln X_0$ or $X_0\ge 1, Y_0> 
1+2\ln X_0$. The union of the initial conditions from which either the input or the state start or become negative is given in red in Figure \ref{fig:Red}.

\begin{figure}[t]
    \centering
    \includegraphics[width=0.8\linewidth]{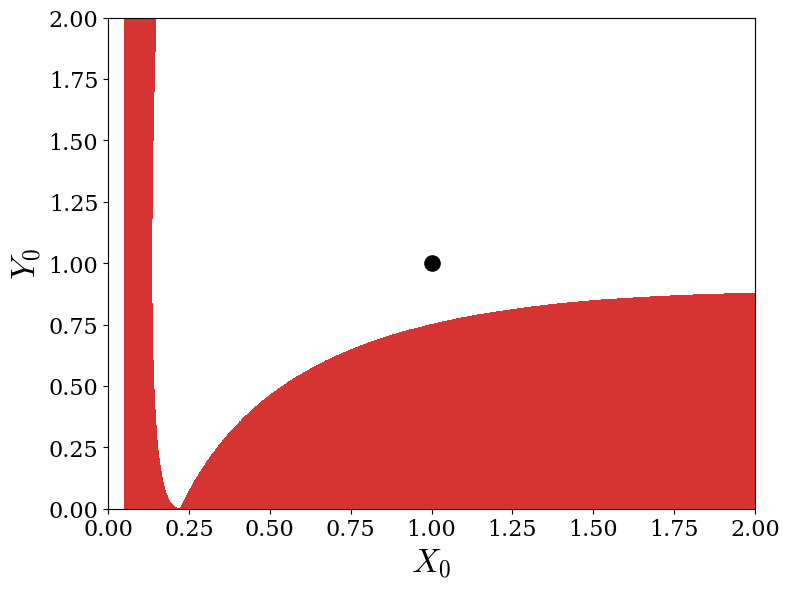}
    \caption{Red: the set of initial conditions $(X_0,Y_0)$ for which feedback-linearizing the predator-prey model results either in the control $U(t)$ or the state $Y(t)$ taking negative values at some $t
    \geq 0$. Black dot: desired equilibrium $(0,0)$.}
    \label{fig:Red}
\end{figure}

The failed feedback linearization attempt exposes the challenge but global asymptotic stabilization of  \eqref{pp-sf} in the positive orthant by positive feedback turns out to be possible. In \cite{KrsticAsymmetryDemystified} we designed the backstepping-based CLF
\begin{equation}
\label{V-bkst}
V(X,Y)
=\Omega\left(\frac{1}{X}\right)+\Omega\left(\frac{Y}{X}\right)\,,
\end{equation}
where
\begin{equation}
\label{eq-Volterra-Lyapunov}
\fbox{$\Omega(S) = S-1 - \ln (S)$}
\end{equation}
so that
\begin{equation}
\label{Vdot-with-L-G}
\dot V
= L(X,Y)+G(X,Y)U
\end{equation}
where
\begin{subequations}
\label{eq-LG-def}
\begin{eqnarray}
\label{eq-L-def}
L(X,Y)&=&\frac{-(X-1)^2+Y(Y-X)}{X},
\\
\label{eq-G-def}
G(X,Y)&=&\frac{X-Y}{X}.
\end{eqnarray}
\end{subequations}
The CLF \eqref{V-bkst}, strict and global on $(0,\infty)^2$ around $X=Y=1$, is unlike anything seen for two-species population systems in the literature before \cite{KrsticAsymmetryDemystified}: not only non-separable in the Volterra-Lyapunov function $\Omega$ but also employing a reciprocal argument.

\begin{figure}[t]
    \centering
    \includegraphics[width=0.9\linewidth]{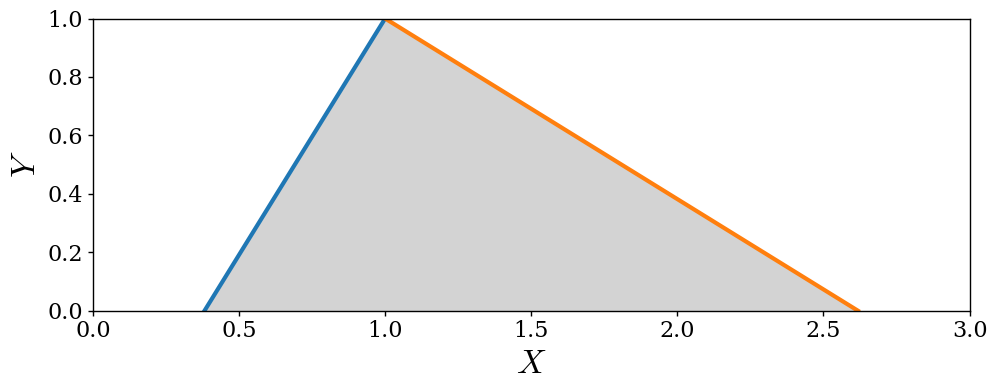}
    \caption{The set $\mathcal{S}_{+}$ in \eqref{eq-S+}, on which the  formula  \cite[(22), (19), (16)]{LinSontag1995} is inapplicable for \eqref{pp-sf}.}
    \label{fig:triangle}
\end{figure}

\begin{rem}\rm
\label{rem-S+}
The universal formula from \cite{LinSontag1995} for systems $\dot x=f(x)+g(x)u$  with $f(0)=0$ and  $u>0$ is inapplicable. With the input shift $u=U-1$ that is equilibrium-anchoring, $u=0$, one gets that $\dot V = (L+G) + Gu$. But $L+G$ and $G$ are  simultaneously positive on the entire triangular subset
\begin{align}
\label{eq-S+}
&\mathcal{S}_{+}
=
\Big\{
(X,Y)\in(0,\infty)^2 :
0<Y<1,\;
\nonumber\\
&\frac{(3-Y)-\sqrt{5}\,(1-Y)}{2}
<
X
<
\frac{(3-Y)+\sqrt{5}\,(1-Y)}{2}
\Big\}\,,
\end{align}
so the  formula  \cite[(22), (19), (16)]{LinSontag1995} neither ensures  stabilization of $X=Y=1$ nor forward invariance of $(0,\infty)^2$.  
\mbox{}\hfill $\triangle$
\end{rem}


\label{secIII}

In \cite{KrsticAsymmetryDemystified} we proposed the backstepping-based globally asymptotically stabilizing positive feedback law
\begin{equation}
\label{eq-nom-fbk-U0}
U_0= \frac{Y^2}{X}\,.
\end{equation}
Next, we introduce the motivating idea for the rest of the paper. We study a redesign of  feedback \eqref{eq-nom-fbk-U0} in which the predator-to-prey ratio is ``expanded'': amplified when $Y>X$ and attenuated when $Y<X$. 
The following theorem is proven in Appendix \ref{pf-Thm1}.

\begin{theorem}[Expander-fortified fbk. remains stabilizing]
\label{thm-intensified}  
Consider the predator-prey system \eqref{pp-sf} under the positive feedback law
\begin{equation}
U(X,Y)=Y\Sigma\!\left(\frac{Y}{X}\right)
\end{equation}
where $\Sigma:(0,\infty)\to(0,\infty)$ is strictly increasing, satisfies $\Sigma(1)=1$, and it expands away from $1$ in the sense that 
\begin{subequations}
\label{eq-expansion}    
\begin{eqnarray}
\label{eq-expansion1}    
\Sigma(s)<s\,, && 0<s<1\\
\label{eq-expansion2}    
\Sigma(s) >s\,, && s>1\,.
\end{eqnarray}
\end{subequations}
Then the equilibrium $X=Y=1$ is globally asymptotically stable on $(0,\infty)^2$.
\end{theorem}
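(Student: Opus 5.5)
We use the CLF $V$ in \eqref{V-bkst} as a strict Lyapunov function for the closed loop. By \eqref{Vdot-with-L-G}, $\dot V = L + GU$. First we rewrite the derivative under the nominal law \eqref{eq-nom-fbk-U0}. With $s = Y/X$, we have $U_0 = Y s$, and
\begin{align*}
L + G U_0 = \frac{-(X-1)^2 + Y(Y-X)}{X} + \frac{X-Y}{X}\,\frac{Y^2}{X}.
\end{align*}
We simplify by writing $Y(Y-X) = Y^2 (X-Y)\cdot\big(-\tfrac{1}{Y}\big)$. Then
\begin{align*}
\dot V\big|_{U_0} = -\frac{(X-1)^2}{X} + \frac{Y(X-Y)}{X}\left(\frac{Y}{X} - 1\right) = -\frac{(X-1)^2}{X} - \frac{Y(X-Y)^2}{X^2},
\end{align*}
which is negative definite about $(1,1)$. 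This recovers the strictness established in \cite{KrsticAsymmetryDemystified}.

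Next we treat the expander law. Let $U = Y\Sigma(s)$. Then
\begin{align*}
\dot V = \dot V\big|_{U_0} + G\,(U - U_0) = -\frac{(X-1)^2}{X} - \frac{Y(X-Y)^2}{X^2} + \frac{X-Y}{X}\, Y\,\bigl(\Sigma(s) - s\bigr).
\end{align*}
The key observation concerns the sign of the extra term. The factor $X - Y$ has the sign of $1 - s$. By \eqref{eq-expansion}, the factor $\Sigma(s) - s$ has the sign of $s - 1$, and it vanishes at $s = 1$. Hence
\begin{align*}
(X-Y)\bigl(\Sigma(s) - s\bigr) \le 0 \quad \text{everywhere on } (0,\infty)^2.
\end{align*}
So the extra term only adds decrease, and
\begin{align*}
\dot V \le -\frac{(X-1)^2}{X} - \frac{Y(X-Y)^2}{X^2} < 0 \quad \text{for } (X,Y) \ne (1,1).
\end{align*}
The monotonicity of $\Sigma$ and $\Sigma(1) = 1$ are needed only for positivity and regularity of $U$. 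Positivity holds because $\Sigma > 0$ and $Y > 0$.

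It remains to upgrade this to global asymptotic stability on the open quadrant. We need the following facts.
\begin{enumerate}
\item $V$ is positive definite about $(1,1)$, since $\Omega \ge 0$ with equality only at $1$, and $V = 0$ forces $X = 1$ and $Y = X$.
\item $V$ is proper on $(0,\infty)^2$, meaning $V \to \infty$ as $X \to 0$, $X \to \infty$, $Y \to 0$ or $Y \to \infty$. When $X \to \infty$, the term $\Omega(1/X) \sim \ln X \to \infty$. When $X \to 0$, $\Omega(1/X) \to \infty$. With $X$ confined to a compact subset of $(0,\infty)$, $\Omega(Y/X) \to \infty$ as $Y \to 0$ or $Y \to \infty$.
\end{enumerate}
Properness makes sublevel sets compact subsets of the open quadrant. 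Therefore solutions cannot reach the boundary or escape to infinity in finite time; they exist for all time and stay in $(0,\infty)^2$.

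For local Lipschitzness of the vector field we assume $\Sigma$ is locally Lipschitz, or at least continuous, in which case existence holds and the Lyapunov argument still applies. Standard Lyapunov theorems then give global asymptotic stability.

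The main obstacle is the sign bookkeeping in the cross term. We must verify carefully that the grouping of $L + GU_0$ is correct, and this is the step to double-check.
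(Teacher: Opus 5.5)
Your proof is correct and is the same argument as the paper's: you write $\dot V = -q + G\,(U-U_0)$ with $U_0=Y^2/X$, then use the expansion property to show that $G$ and $U-U_0 = Y(\Sigma(Y/X)-Y/X)$ have opposite signs, so $\dot V\le -q<0$ away from $(1,1)$. Your checks of the properness of $V$ and of well-posedness, which the paper leaves implicit, are a welcome addition; one small correction is that $\Sigma(1)=1$ is what makes $(1,1)$ an equilibrium of the closed loop, not something needed for positivity of $U$, which comes from $\Sigma>0$ and $Y>0$.
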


One can come up with numerous examples of function $\Sigma$ that satisfy conditions \eqref{eq-expansion}. The simplest is, arguably, $\Sigma(s)=s^2$, giving $U=Y^3/X^2$. 

To be precise, the function $\Sigma$ performs an expansion for $s>1$ but it acts, mathematically, as a contraction for $s\in(0,1)$. This may confuse, but it is consistent. The reduction of control for $Y/X\in(0,1)$, while it looks like a mathematical contraction, namely, the weakening of control, is, biologically, actually strengthening of control. When the predator is dominated by the prey, reducing the harvesting of the predator actually represents the strengthening of control. Hence, the term ``expander'' is meaningful for $\Sigma$ for both $s\in(0,1)$ and $s>1$. 


\section{Building Control Penalty from Feedback ``Contractor Function''}
\label{secIV}

With the next technical lemma, proven in Appendix \ref{app-proof-Lemma1}, we lay the groundwork for building inverse-optimal cost functions $\Psi(\cdot)$ on control for different feedback laws, using a family of ``contractor functions'' $\Theta(\cdot)$.

\begin{lemma}
\label{lem-Theta-Psi}
Assume $\Theta:(0,\infty)\to(0,\infty)$ satisfies:
\begin{enumerate}
\item $\Theta \in C^1(0,\infty)$ and $\Theta'(s)>0$ for all $s>0$;
\item $\Theta(1)=1$ and 
$\Theta'(1)<1$;
\item $(\Theta(s)-s)(s-1)<0$ for all $s\neq 1$;
\item $\lim_{s\to 0^+}\Theta(s)=0$ and $\lim_{s\to\infty}\Theta(s)=\infty$.
\end{enumerate}
Define $\Psi$ separately on the two sides of $1$ as follows. 
Fix $s_0^- \in (0,1)$ and $s_0^+ \in (1,\infty)$ 
and define
\begin{subequations}
\label{eq-Psidef}
\begin{eqnarray}
\label{eq-Psidef1}
\Psi(s)
&=&
\exp\!\left(
\int_{s_0^-}^{s}\frac{1}{\tau-\Theta(\tau)}\,d\tau
\right), \quad s \in (0,1) \qquad
\\
\label{eq-Psidef2}
\Psi(s)
&=&
\exp\!\left(
\int_{s_0^+}^{s}\frac{1}{\tau-\Theta(\tau)}\,d\tau
\right), \quad s \in (1,\infty)
\\
\label{eq-Psidef3}
\Psi(1)&=& 0\,, \quad \mbox{by continuous extension.}
\end{eqnarray}
\end{subequations}


\noindent Then the following holds:

\medskip

\noindent\textbf{A1.} $\Psi(s)>0$ for all $s\neq 1$, and $\Psi(1)=0$.

\medskip

\noindent\textbf{A2.} $\Psi$ is $C^1$ on $(0,\infty)\setminus\{1\}$ and satisfies the ODE
\begin{equation}
\label{eq-Theta-to-Psi}
{\Psi'(s)}=\frac{1}{s-\Theta(s)}{\Psi(s)}\,, \quad \Psi(1)=0 \,.    
\end{equation}


\noindent\textbf{A3.} $\Psi$ has a strict minimum at $s=1$ and $\Psi(s)\to 0$ as $s\to 1$.

\medskip

\noindent\textbf{A4.} $\Psi$ is strictly decreasing on $(0,1)$ and strictly increasing on $(1,\infty)$.

\medskip

\noindent\textbf{A5.} $\Psi$ is strictly convex on $(0,\infty)\setminus\{1\}$.

\medskip



\end{lemma}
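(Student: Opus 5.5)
The whole argument runs through the denominator $h(s):=s-\Theta(s)$. Condition~3 of the lemma says precisely that $h(s)<0$ for $s\in(0,1)$ and $h(s)>0$ for $s\in(1,\infty)$. Condition~1 makes $h$ a $C^1$ function. Hence $1/h$ is continuous on each of the intervals $(0,1)$ and $(1,\infty)$, and the integrals in \eqref{eq-Psidef1}--\eqref{eq-Psidef2} are well defined for every $s$ on the same side of $1$ as the corresponding base point $s_0^\mp$. Note that condition~4 is not used anywhere below.

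\emph{A1 and A2.} Since $\Psi$ is an exponential on $(0,\infty)\setminus\{1\}$, we get $\Psi>0$ there, and $\Psi(1)=0$ holds by definition. This gives A1 once the continuous extension is justified under A3. By the fundamental theorem of calculus, $\Psi$ is differentiable off $1$ with $\Psi'(s)=\Psi(s)/h(s)$, which is \eqref{eq-Theta-to-Psi}. The right-hand side is continuous, so $\Psi\in C^1$ there. Because $\Theta\in C^1$, the quotient $\Psi/h$ is itself $C^1$, so $\Psi$ is in fact $C^2$ off $1$. We will need this for A5.

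\emph{A4.} This is immediate from the ODE: $\operatorname{sgn}\Psi'(s)=\operatorname{sgn} h(s)$, which is negative on $(0,1)$ and positive on $(1,\infty)$.

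\emph{A5.} Differentiating $\Psi'=\Psi/h$ once more gives
\[
\Psi''=\frac{\Psi' h-\Psi h'}{h^2}=\frac{\Psi\,(1-h')}{h^2}=\frac{\Psi(s)\,\Theta'(s)}{h(s)^2}.
\]
Here $1-h'=\Theta'$ because $h=\mathrm{Id}-\Theta$. This expression is strictly positive by A1 and condition~1, so $\Psi$ is strictly convex on each interval.

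\emph{A3, the main step.} The obstacle is showing $\Psi(s)\to0$ as $s\to1^\pm$, that is, that $\int 1/h$ diverges to $-\infty$ as $s$ approaches $1$ from either side. This is where $\Theta'(1)<1$ enters.

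Since $\Theta\in C^1$ and $\Theta(1)=1$, we have $h(\tau)/(\tau-1)\to 1-\Theta'(1)=:c>0$ as $\tau\to1$. Choose $\delta>0$ such that $0<h(\tau)/(\tau-1)\le 2c$ for $0<|\tau-1|<\delta$.

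For $s\in(1,1+\delta)$ this gives $1/h(\tau)\ge 1/(2c(\tau-1))$ on $(s,1+\delta)$. Hence
\[
\int_{s_0^+}^{s}\frac{d\tau}{h(\tau)}=\mathrm{const}-\int_s^{1+\delta}\frac{d\tau}{h(\tau)}\le \mathrm{const}-\frac{1}{2c}\ln\frac{\delta}{s-1}\to-\infty .
\]
On the left side, $h<0$ and $|h(\tau)|\le 2c(1-\tau)$, so $\int_{s_0^-}^{s}1/h=\int_s^{s_0^-}1/|h|$ up to a constant. This also tends to $-\infty$ as $s\to1^-$, again logarithmically.

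Therefore $\Psi(s)\to0$ from both sides, and the extension $\Psi(1)=0$ is continuous. Combined with A1, $s=1$ is a strict global minimum. In fact the same comparison shows the sharper behaviour $\Psi(s)\asymp|s-1|^{1/c}$ near $1$, with $1/c>1$, which could be used to give $\Psi'(1)=0$ if needed later.
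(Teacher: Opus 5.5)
Your proof is correct and follows the paper's route. A1, A2 and A4 come straight from the exponential and the fundamental theorem of calculus. Your formula $\Psi''=\Psi\Theta'/(s-\Theta)^2$ for A5 is the paper's identity $\Theta'=\Psi\Psi''/(\Psi')^2$ rearranged, and you rightly note the $C^2$ regularity the paper uses without comment. Your two-sided comparison for A3 is a rigorous version of the paper's asymptotic $\int 1/(\tau-\Theta(\tau))\,d\tau\sim\frac{1}{1-\Theta'(1)}\ln|s-1|$.

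One minor caveat concerns your closing aside. With $\Theta$ only $C^1$ and $c=1-\Theta'(1)$, the comparison gives $C_1|s-1|^{1/(c-\epsilon)}\le\Psi(s)\le C_2|s-1|^{1/(c+\epsilon)}$ for each $\epsilon>0$, not $\Psi\asymp|s-1|^{1/c}$, since a logarithmic correction factor can appear. Still, $\Psi'(1)=0$ follows because $1/(c+\epsilon)>1$ for small $\epsilon$.
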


\section{Inverse Optimal Stabilizer for Predator-Prey}
\label{secV}

Now we advance from using an expander $\Sigma$ to strengthen stability to making the feedback inverse optimal. 

\begin{theorem}[Predator-prey inv. opt. stabilizer]
\label{sec-inv-opt}
Let $\Theta$ satisfy the conditions in Lemma \ref{lem-Theta-Psi} and $\Psi$ be defined by \eqref{eq-Psidef}. 
The positive feedback law
\begin{equation}
\label{eq-optimal-U*}
\fbox{$\displaystyle U^*(X,Y)=
Y\, \Sigma\left(\frac{Y}{X}\right)$} \quad X,Y>0 
\end{equation}
where the {\em expander function}
\begin{equation}
\label{eq-def-Sigma}
\Sigma(s):=\Theta^{\rm -1}(s)\,,\quad s>0
\end{equation}
is the 
minimizer of the
infinite-horizon cost
\begin{equation}
J(U)
=
\int_{0}^{\infty}
\left[
q(X,Y)
+
r\,  (X,Y)\,\Psi\left(\frac{U}{Y}\right)
\right]
\,dt\,
\end{equation}
along the solutions of \eqref{pp-sf}, where the state cost
\begin{equation}
\label{eq-q-bkst-def}
q(X,Y)
=
\frac{(X-1)^2}{X}
+
\frac{(Y-X)^2}{X}\frac{Y}{X}\,,
\end{equation}
is positive definite on $X,Y>0$, 
and the state-dependent input weight defined as
\begin{equation}
\label{eq-input-cost-r}
\fbox{$\displaystyle r (X,Y)=-\frac{Y\, G(X,Y)}{\Psi'\left(
\Sigma(Y/X)
\right)}$}
\end{equation}
for $Y\neq X$ and as $r(X,X)=
\lim_{Y\rightarrow X} r(X,Y)$
is positive on $X,Y>0$. 
Furthermore, the minimal cost is given by
\begin{equation}
J^*= V(X_0,Y_0)\,,
\end{equation}
where the value function is given by \eqref{V-bkst}, 
and the optimizer \eqref{eq-optimal-U*} is also a global asymptotic stabilizer of $X=Y=1$ with a region of attraction $X,Y>0$.
\end{theorem}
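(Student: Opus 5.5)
The plan is to verify the Hamilton--Jacobi--Bellman equation with $V$ from \eqref{V-bkst} as the candidate value function, and then close with a completion-of-squares type argument. For the system \eqref{pp-sf} we have $\dot V = L + GU$ by \eqref{Vdot-with-L-G}, so the HJB equation reads
\begin{equation*}
0=\min_{U>0}\Big[q + r\,\Psi(U/Y) + L + G U\Big].
\end{equation*}

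\emph{Step 1: positivity of $r$ and $q$.} First I check that $r$ in \eqref{eq-input-cost-r} is positive. Set $s=Y/X$, so that $G=1-s$. Because $\Theta$ is increasing with $\Theta(1)=1$, the expander $\Sigma=\Theta^{-1}$ satisfies $\Sigma(s)>1$ exactly when $s>1$. By A4, $\Psi'(\Sigma(s))$ then has the sign of $s-1$. Hence $-Y(1-s)/\Psi'(\Sigma(s))>0$ for $s\ne1$.

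For the limit at $s=1$ I use the ODE \eqref{eq-Theta-to-Psi} in the form
\begin{equation*}
\Psi'(\Sigma)=\frac{\Psi(\Sigma)}{\Sigma-s}.
\end{equation*}
This gives
\begin{equation*}
r=\frac{Y(s-1)(\Sigma-s)}{\Psi(\Sigma)}.
\end{equation*}
Near $1$ the quantity $\Sigma-s$ is of order $(\Sigma'(1)-1)(s-1)$, so $r$ behaves like $(s-1)^2/\Psi(\Sigma(s))$. Its limit therefore depends on how fast $\Psi\to0$. Since $\int d\tau/(\tau-\Theta)$ diverges logarithmically with coefficient $1/(1-\Theta'(1))$, $\Psi$ behaves like $|s-1|^{1/(1-\Theta'(1))}$. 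Whether the limit exists and is positive is the delicate point, and this is the step I expect to be the main obstacle. My plan is to accept the limit definition as the statement does and note that positivity off the diagonal is what the argument needs; a null set of times contributes nothing to the integral.

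Positive definiteness of $q$ in \eqref{eq-q-bkst-def} is immediate, since both terms are nonnegative and vanish only when $X=1$ and $Y=X$.

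\emph{Step 2: the minimizer.} The map $U\mapsto r\Psi(U/Y)+GU$ is strictly convex in $U$ on each side of $U=Y$ by A5. I also need convexity across $U/Y=1$. There $\Psi$ has its minimum with $\Psi(1)=0$, and $\Psi'$ tends to $0$ from both sides (since $\Psi$ vanishes faster than linearly, the exponent being $>1$ because $0<1-\Theta'(1)<1$ when $\Theta'(1)>0$). So $\Psi'$ is increasing across $1$, and $\Psi$ is convex on all of $(0,\infty)$.

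The stationarity condition is $(r/Y)\Psi'(U/Y)=-G$. Substituting $U=Y\Sigma(Y/X)$ makes this hold identically by the definition of $r$. Hence $U^*$ is the unique global minimizer over $U>0$, with coercivity supplied by $\Psi\to\infty$ as $s\to\infty$ and by monotonicity.

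\emph{Step 3: the HJB identity.} Next I verify that
\begin{equation*}
q+r\Psi(\Sigma)+L+GY\Sigma=0.
\end{equation*}
Using $r\Psi(\Sigma)=Y(s-1)(\Sigma-s)$, this becomes $q+L+Y(1-s)s$, after the $Y(1-s)\Sigma$ terms cancel. Writing $Y(1-s)s=Y^2(X-Y)/X^2$, the identity reduces to
\begin{equation*}
q=\frac{(X-1)^2}{X}-\frac{Y(Y-X)}{X}-\frac{Y^2(X-Y)}{X^2}=\frac{(X-1)^2}{X}+\frac{Y(Y-X)^2}{X^2},
\end{equation*}
which matches \eqref{eq-q-bkst-def}. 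This is routine algebra.

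\emph{Step 4: optimality and stability.} Global asymptotic stability of $U^*$ follows from Theorem~\ref{thm-intensified}, since $\Sigma=\Theta^{-1}$ inherits the properties \eqref{eq-expansion} from condition 3 of Lemma~\ref{lem-Theta-Psi}.

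For any admissible positive $U$ that yields $J<\infty$ and convergence to $(1,1)$, I integrate
\begin{equation*}
q+r\Psi+\dot V\ \ge\ 0,
\end{equation*}
with equality only at $U^*$ (by Steps 2--3). Since $V(1,1)=0$, this gives $J(U)\ge V(X_0,Y_0)$, with equality for $U^*$. Hence $J^*=V(X_0,Y_0)$. If needed, I restrict the admissible class to controls that guarantee $V(t)\to0$, as is standard in inverse optimality arguments.
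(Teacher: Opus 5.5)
Your proof follows the paper's HJB verification essentially step for step. It uses the same Hamiltonian and gets uniqueness of its minimizer from strict convexity plus the stationarity built into the definition of $r$. It uses the same identity $GU^*+r\Psi(U^*/Y)=GY^2/X$, giving $q=-(L+GY^2/X)$; you obtain it by rewriting $r$ through \eqref{eq-Theta-to-Psi}, the paper through $\Theta(U^*/Y)=Y/X$. For stability you invoke Theorem~\ref{thm-intensified}, which the paper also mentions before its direct bound $\dot V=-q-r\Psi(U^*/Y)$.

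Two of your additions repair points the paper passes over:
\begin{itemize}
\item A5 of Lemma~\ref{lem-Theta-Psi} gives convexity only away from $s=1$. Your observation that $\Psi'(s)\to0$ as $s\to1$ (exponent $1/(1-\Theta'(1))>1$) is what makes $\Psi$, and hence the Hamiltonian in $U$, strictly convex on all of $(0,\infty)$.
\item The paper stops at the HJB identity. You integrate it over controls with $V(t)\to0$, which is what makes the bound $J\ge V(X_0,Y_0)$ go through.
\end{itemize}

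\textbf{The diagonal $Y=X$.} This is the step to fix. Your claim that it is visited only on a null set of times is false for competing controls. For example, $U=2X-1$, positive for $X>1/2$, keeps $Y\equiv X$ for all $t$ while $X\to1$. The correct argument is pointwise. On $Y=X$ one has $G=0$ and $q+L=0$, so the Hamiltonian equals $r\Psi(U/Y)\ge0$. It vanishes at $U^*=Y$ for any value $r(X,X)\in[0,+\infty]$ (with $\infty\cdot0=0$). Nothing more is needed for $J\ge V(X_0,Y_0)$ with equality at $U^*$.

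\textbf{Positivity of $r(X,X)$.} Do not simply accept it. Your own asymptotics give $r\sim|s-1|^{2-1/(1-\Theta'(1))}$, so $r\to0$ whenever $\Theta'(1)<1/2$. For instance, $\Theta(s)=s^{1/3}$ satisfies every hypothesis of Lemma~\ref{lem-Theta-Psi}. It gives $\Psi(s)=|s^{2/3}-1|^{3/2}$ up to normalization, and $r=Y\rho|\rho-1|^{1/2}/(\rho+1)^{1/2}$ with $\rho=Y/X$, which vanishes at $\rho=1$.

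The paper's argument (numerator and denominator of $r$ share a sign off the diagonal) only yields a nonnegative limit. Moreover, with independent normalizations $s_0^\pm$ in \eqref{eq-Psidef}, the two one-sided limits need not coincide when $\Theta'(1)=1/2$. So that clause of the statement needs an extra hypothesis on $\Theta'(1)$ and the normalization, or should be weakened to $r\ge0$ on the diagonal. Optimality, $J^*=V(X_0,Y_0)$, and global asymptotic stability are unaffected.
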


\proof
We first verify that $r\,  (X,Y)$ is positive on 
$X,Y>0$. 
Since $G(X,Y)=\frac{X-Y}{X}$ and $X>0$, one has $\operatorname{sign}(G)=\operatorname{sign}(X-Y)$. 
Since $Y>0$ and $G(X,Y)=(X-Y)/X$, one has $\operatorname{sign}(-YG)=\operatorname{sign}(Y/X-1)$. Moreover, 
because $\Sigma$ is strictly increasing with $\Sigma(1)=1$, it follows that $\operatorname{sign}(\Sigma(Y/X)-1)=\operatorname{sign}(Y/X-1)$. For a strictly convex $\Psi$ with minimum at $1$, the derivative $\Psi'(s)$ has the same sign as $s-1$, hence $\operatorname{sign}(\Psi'(\Sigma(Y/X)))=\operatorname{sign}(Y/X-1)$. Therefore the numerator and denominator in the expression for $r$ have the same sign, which implies $r(X,Y)>0$ for all $Y\neq X$, $X,Y>0$. 
At $Y=X$,
since $r(X,Y)>0$ for every $Y\neq X$, and since the numerator and denominator in the expression for $r(X,Y)$ have the same sign for all $Y\neq X$, it follows that $r(X,Y)$ remains positive as $Y\to X$. Hence any limit value at $Y=X$ is positive (possibly $+\infty$), and therefore $r(X,X)>0$.

\begin{figure}[t]
    \centering
    \includegraphics[width=0.8\linewidth]{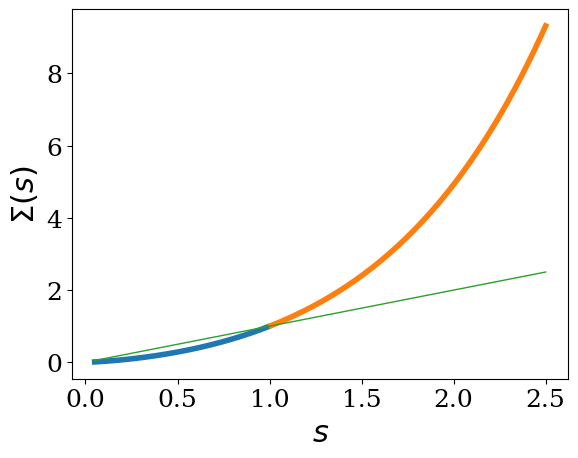}
    \caption{The ``expander'' nonlinearity $\Sigma(s)$, defined by \eqref{eq-def-Sigma}, \eqref{eq-def-Theta}, for $\Psi=\Omega$ defined in \eqref{eq-Volterra-Lyapunov}, and used in the controller \eqref{eq-optimal-U*}. The identity function $s$ in thin green gives the nominal backstepping feedback $U_0=Y^2/X$.}
    \label{fig:Sigma}
\end{figure}

From  differentiation of $V$ along trajectories of \eqref{pp-sf}, we have \eqref{Vdot-with-L-G}, where $L,G$ are defined by \eqref{eq-L-def}, \eqref{eq-G-def}. 
Define the Hamiltonian
\begin{eqnarray}
\label{eq-Hamiltonian}
H(X,Y,U)
&=&
q(X,Y)
+
L(X,Y)
+
G(X,Y)\,U
\nonumber\\
&& +
r\,  (X,Y)\,\Psi(U/Y)\,,
\end{eqnarray}
where
\begin{equation}
\label{eq-q-def-alternative}
q(X,Y)
=-\Big(L(X,Y)+G(X,Y)\frac{Y^2}{X}\Big)\,,
\end{equation}
with $L,G$  defined by \eqref{eq-L-def}, \eqref{eq-G-def}, is equal to \eqref{eq-q-bkst-def}, and  evident to be strictly positive for all $(X,Y)\neq(1,1)$ and vanishes  at $(1,1)$.
For fixed $(X,Y)$, the mapping
$U\mapsto G\,U+r\,  \,\Psi(U/Y)$
is strictly convex on $(0,\infty)$ because $\Psi$ is strictly convex and $r\,  >0$.
The mapping's derivative is
\begin{equation}
\frac{\partial}{\partial U}
\big(GU+r  \Psi(U/Y)\big)
=
G+\frac{r}{Y}   \Psi'(U/Y).
\end{equation}
By definition of $r$ in \eqref{eq-input-cost-r},
$G+\frac{r}{Y}   \Psi'(U^*/Y)=0$, 
so $U^*$ is the unique global minimizer.
Substituting $U=U^*$ into \eqref{eq-Hamiltonian},
\begin{equation}
\label{eq-Ham1}
H(X,Y,U^*)
=q+L+ G U^* +r\,  \Psi(U^*/Y).
\end{equation}
With $r$ in \eqref{eq-input-cost-r},
we obtain
\begin{eqnarray}
G U^*+r\,  \Psi(U^*/Y)
&=&
GY\left(U^*/Y-\frac{\Psi(U^*/Y)}{\Psi'(U^*/Y)}\right)
\nonumber\\
&=&
G Y\,\Theta(U^*/Y)
=
G\,\frac{Y^2}{X}.
\end{eqnarray}
Hence
\begin{equation}
\label{eq-Ham2}
H(X,Y,U^*)
=
q+L+G\frac{Y^2}{X}
=
0
\end{equation}
by definition of $q$ in \eqref{eq-q-def-alternative}. Therefore
$\min_{U>0} H(X,Y,U)=0$, 
so the stationary Hamilton--Jacobi--Bellman equation holds with value function $V$.

To prove global asymptotic stability, we can simply invoke Theorem \ref{thm-intensified} by noting that $\Sigma$ defined in \eqref{eq-def-Sigma} satisfies conditions \eqref{eq-expansion}. But we provide a proof customized to the optimal control result of this theorem. Recall from \eqref{eq-Ham1} and \eqref{eq-Ham2} that 
\begin{equation}
q+L+G\,U^*(X,Y)
+r\,\Psi\!\left(U^*/Y\right)=0,
\end{equation}
which gives
\begin{equation}
\dot V
=
L+G U^*
=
-q
- r\,\Psi\!\left(U^*/Y\right).
\end{equation}
Since, on $X,Y>0$, $q$ is positive definite at $X=Y=1$, $r$ is positive, and $\Psi\!\left(U^*/Y\right)$ nonnegative, the global asymptotic stability conclusion follows. 
\hfill $\square$



\section{Recognizing $L_gV$ Control in the Inverse Optimal Feedback for Predator-Prey Dynamics} 
\label{secVII}

Recognizing the expected $-\dfrac{1}{r} L_gV$ structure, with some positive state-dependent gain $r$, in the feedback law \eqref{eq-optimal-U*}, is  difficult. First, what is $L_gV$ there? From \eqref{Vdot-with-L-G} it appears like $G = 1-Y/X$ should play the $L_gV$ role.
But one has to recognize that, as the actual input to the predator-prey model, one should regard not $U$, but the predator-scaled and shifted-to-setpoint-1 input $u=U/Y - 1$. 

To facilitate the recognition of the $L_gV$ format of the nominal and optimal controllers, we first specialize Theorem \ref{sec-inv-opt} to the Volterra-Lyapunov control cost $\Omega$ given in \eqref{eq-Volterra-Lyapunov}.

\begin{corollary}[{\em Volterra}-cost optimality for predator-prey]
\label{cor-Volterra}
For
\begin{equation}
    \Psi=\Omega\,,
\end{equation}
with $\Omega$ defined in \eqref{eq-Volterra-Lyapunov}, in which case
\begin{equation}
\label{eq-def-Theta-Volterra}
\Theta(s)
=\frac{s\ln s}{s-1}
\end{equation}
and
\begin{eqnarray}
\label{eq-input-cost-r-Volterra}
r  (X,Y) 
&=& \Pi(Y/X)Y
\\
\Pi(Y/X) &:=&\frac{\rho -1}{\Sigma\left(\rho \right)-1} \Sigma\left(\rho \right) \,, \quad \rho = \dfrac{Y}{X}\,,
\end{eqnarray}
with $\Sigma$ given by \eqref{eq-def-Sigma}, the results of Theorem \ref{sec-inv-opt} hold.
\end{corollary}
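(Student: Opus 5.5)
The plan is to reduce Corollary \ref{cor-Volterra} to Theorem \ref{sec-inv-opt}. Three things need checking. First, that the choice $\Psi=\Omega$ is the function produced by Lemma \ref{lem-Theta-Psi} from the contractor \eqref{eq-def-Theta-Volterra}. Second, that this $\Theta$ satisfies hypotheses 1--4 of the lemma. Third, that the general weight \eqref{eq-input-cost-r} reduces to \eqref{eq-input-cost-r-Volterra}.

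For the first point, I will invert the ODE \eqref{eq-Theta-to-Psi}: $\Theta(s)=s-\Psi(s)/\Psi'(s)$. With $\Omega(s)=s-1-\ln s$ and $\Omega'(s)=(s-1)/s$ this gives
\begin{equation*}
\Theta(s)=s-\frac{s(s-1-\ln s)}{s-1}=\frac{s\ln s}{s-1},
\end{equation*}
which is \eqref{eq-def-Theta-Volterra}, with $\Theta(1)=1$ by continuous extension. Conversely, integrating $1/(\tau-\Theta(\tau))=\Omega'(\tau)/\Omega(\tau)$ in \eqref{eq-Psidef} on each side of $1$ gives $\Psi=c_\pm\,\Omega$. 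The constants $c_\pm$ depend on $s_0^\pm$, and choosing $s_0^\pm$ appropriately, or simply absorbing the constants into $r$, yields $\Psi=\Omega$. In fact the proof of Theorem \ref{sec-inv-opt} never uses the specific normalization, only the ODE relation $G U^*+r\Psi(U^*/Y)=GY\Theta(U^*/Y)$. So it suffices that $\Omega$ satisfies \eqref{eq-Theta-to-Psi} together with the convexity and monotonicity properties A1--A5, and these are immediate for $\Omega$.

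For the second point, I will verify the hypotheses of Lemma \ref{lem-Theta-Psi} for $\Theta(s)=s\ln s/(s-1)$.

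\textbf{Limits.} As $s\to0^+$ we have $\Theta\to0$, and $\Theta\sim\ln s\to\infty$ as $s\to\infty$.

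\textbf{Value and slope at $1$.} Expanding at $s=1$ with $s=1+\epsilon$ gives $\Theta=1+\epsilon/2+O(\epsilon^2)$, so $\Theta'(1)=1/2<1$.

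\textbf{Contraction property.} Note that $\Theta(s)-s=-s\,\Omega(s)/(s-1)$. Since $\Omega>0$ for $s\neq1$, this quantity has the sign opposite to $s-1$, which gives property 3.

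\textbf{Monotonicity.} This is the step I expect to be the main, though mild, obstacle. Writing $\Theta'(s)=\frac{s-1-\ln s}{(s-1)^2}$, the numerator is $\Omega(s)>0$. Hence $\Theta'>0$ for $s\neq1$, and $\Theta'(1)=1/2$. The $C^1$ regularity at $s=1$ follows from analyticity of $s\ln s/(s-1)$ near $1$.

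For the third point, substituting $\Psi'=\Omega'$ evaluated at $\Sigma(\rho)$ with $\rho=Y/X$ gives $\Psi'(\Sigma(\rho))=(\Sigma(\rho)-1)/\Sigma(\rho)$. Combining this with $-YG=Y(\rho-1)$ yields
\begin{equation*}
r=Y\,\frac{(\rho-1)\Sigma(\rho)}{\Sigma(\rho)-1}=\Pi(\rho)Y.
\end{equation*}
At $\rho=1$, the limit is taken via L'H\^opital as $\Sigma(1)/\Sigma'(1)=\Theta'(1)=1/2$, so $r(X,X)=X/2>0$. All conclusions of Theorem \ref{sec-inv-opt} then follow directly.
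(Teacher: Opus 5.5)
Your proposal is correct and takes the route the paper intends: the paper gives no separate proof and treats the corollary as Theorem~\ref{sec-inv-opt} with $\Psi=\Omega$. You supply the checks it leaves implicit, namely that $\Theta=s-\Omega/\Omega'=s\ln s/(s-1)$ satisfies Lemma~\ref{lem-Theta-Psi} (via $\Theta-s=-s\Omega/(s-1)$ and $\Theta'=\Omega/(s-1)^2$), and that $-YG/\Omega'(\Sigma(\rho))=\Pi(\rho)Y$ with $\Pi(1)=1/\Sigma'(1)=1/2$. Your remark about absorbing the constants $c_\pm=1/\Omega(s_0^\pm)$ into $r$ works only when $c_+=c_-$; your other option settles it, namely taking $s_0^\pm$ with $\Omega(s_0^\pm)=1$, which exist because $\Omega$ maps each of $(0,1)$ and $(1,\infty)$ onto $(0,\infty)$.
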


For the ``input'' $u$, for the control cost \eqref{eq-def-Theta-Volterra} and with the recognition that $(\Psi')^{-1}=1/(1-s)$, and with the expander \eqref{eq-def-Sigma}, the nominal-optimal control pair is expressed by
\begin{eqnarray}
\label{eq-u01}
u_0&=& \frac{U_0}{Y} - 1 =\frac{\ln\left({1+\dfrac{GY}{r}}\right)}{\dfrac{GY}{r}} -1
\\
\label{eq-u*1}
u^*&=& \frac{U^*}{Y} - 1 =-\frac{\dfrac{GY}{r}}{1+\dfrac{GY}{r}} 
\,.
\end{eqnarray}
Their relation is expressed by 
\begin{equation}
\label{eq-u0-to-u*}
u^* = \frac{d \left(u_0\ G Y/\sqrt{r} \right)}{d\left(G Y/\sqrt{r} \right)}\,,
\end{equation}
consistent with the nominal-optimal relation \eqref{eq-inv-opt-class-u*-altern}. 
The actual nominal-optimal control pair is 
\begin{eqnarray}
\label{eq-u02}
U_0 &=& Y \frac{\ln\left({1+\dfrac{GY}{r}}\right)}{\dfrac{GY}{r}}>0
\\
\label{eq-u*2}
U^* &=& Y \frac{1}{1+\dfrac{GY}{r}}>0\,.
\end{eqnarray}

\begin{figure}[t]
    \centering
    \includegraphics[width=0.8\linewidth]{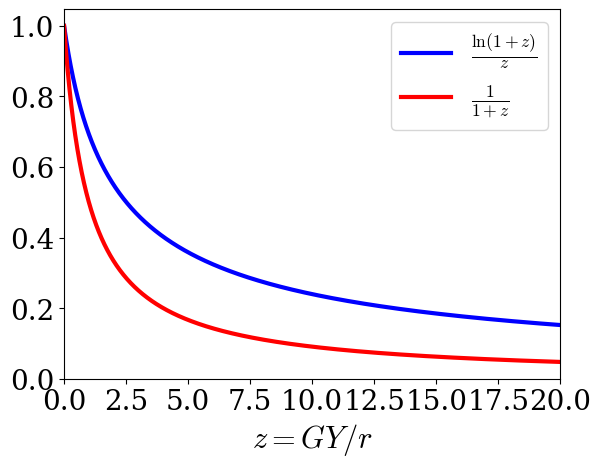}
    \caption{The functions that govern $U/Y$ in the feedback laws \eqref{eq-u02} and \eqref{eq-u*2}. Compared to the blue curve, for the nominal control, the red curve, representing the optimal control, decays {\em twice} as fast near $z=0$, a tell-tale sign of optimality (the gain margin of 1/2), and at a rate that gets even faster --- $\ln(z)$ faster --- as $z$ grows.}
    \label{fig:GYr}
\end{figure}

Figure \ref{fig:GYr} confirms the experience with $L_gV$ controllers but requires an interpretation. The optimal control is amplified with the expender function $\Sigma(\cdot)$ as its argument $Y/X$ grows above 1 (and, likewise, attenuated, when $Y/X$ decays below 1), as shown in Figure \ref{fig:Sigma}. However, in terms of the dependence on  argument $G(X,Y)Y/r(X,Y)$, the two controllers are decreasing, and the optimal control $U^*/Y$ decreases much more aggressively than the nominal control $U_0/Y$ away from its equilibrium value $U_{\rm e}=1$ with the increase of $GY/r$. This can potentially confuse since $U^*$ grows more aggressively than $U_0$ in $Y/X$, and both $U^*, U_0$ can exceed $1$, but is explained by the factor $Y$, which can be both (much) larger and smaller than $1$.

Though with less transparency, for a general convex $\Psi$, the controllers are written as
\begin{eqnarray}
U_0 &=& Y\Theta\left((\Psi')^{-1}\left( -\dfrac{GY}{r}\right) \right)
\\
U^* &=& Y(\Psi')^{-1}\left( -\dfrac{GY}{r}\right)\,,
\end{eqnarray}
and the relation \eqref{eq-u0-to-u*} is obeyed even for the general $\Psi$ (with $u_0=U_0/Y-1, u^*=U^*/Y-1$). 


\section{Biological Interpretation of Inverse Optimality}
\label{secVI}

The particular inverse optimal result of Corollary \ref{cor-Volterra} is rich with biologically interesting  justification. 

\vspace{-3mm}\paragraph*{Asymmetry of optimal feedback: high-predator amplification and high-prey attenuation in predator harvesting.} Let us first consider the optimal feedback \eqref{eq-optimal-U*} and the effect of the strictly increasing low-attenuator high-amplifier function $\Sigma(\cdot)$ defined with \eqref{eq-def-Theta-Volterra} and \eqref{eq-def-Sigma}. Near $s=0^+$, $\Sigma(s)\sim s/|\ln(s)|$, which has a zero slope but its slope is higher than $s^\alpha$ for all $\alpha>1$, i.e., it is locally superlinear but less flat than any locally superlinear polynomial. So, $\Sigma$ is a low-$Y/X$ attenuator. It uses mild harvesting of the predator when the prey dominates the predator. As $s\rightarrow\infty$, $\Sigma(s)\sim {\rm e}^s$. It is a high-$Y/X$ amplifier, namely, it harvests the predator aggressively when the predator dominates the prey. Finally, $\Sigma(1)=1$ and $\Sigma'(1)=2$. The latter property is a tell-tale sign of inverse optimality and the resulting use of twice the gain than is needed for mere stabilization, which results in a gain margin of $[1/2, \infty)$. 

\vspace{-3mm}\paragraph*{Predator-weighted and asymmetric control penalty.} We consider the complete control penalty term, $r(X,Y) \Psi(U/Y)$.
Near $s=0^+$, $\Pi(s)\sim \Sigma(s) \sim s/|\ln(s)|$, meaning $\Pi(\cdot)$ at zero is zero and flat. As $s\rightarrow\infty$, $\Pi(s)\sim s$. And, $\Pi(1)=1/2, \Pi'(1)=2/3$. Let us now consider the sensitivity of the state-dependent control cost to harvesting $U$:
\begin{equation}
\frac{\partial}{\partial U}
\left[
r(X,Y)\,\Psi\!\left(\frac{U}{Y}\right)
\right]
=
\Pi\!\left(\frac{Y}{X}\right)
\left(1-\frac{Y}{U}\right).
\end{equation}
This is a product of two functions, each increasing in its single arguments, $Y/X$ and $U/Y$. The cost sensitivity can be studied. We focus on three cases that are not immediately intuitive, biologically. 

In the regime $X \gg Y \gg U$, prey are relatively abundant compared to predators, the predator is temporarily scarce in relative terms, and harvesting per predator is extremely low. A purely predator-centric heuristic would suggest reducing harvesting even further. However, in a predator–prey system where prey are relatively abundant, the predator has strong intrinsic recovery potential; shutting down harvesting entirely does not enhance survival but instead risks destabilizing the prey–predator balance. The separable cost sensitivity to harvesting, $\partial[r(X,Y)\Psi(U/Y)]/\partial U \sim - \frac{Y/X}{|\ln(Y/X)|}\frac{Y}{U}<0$, reflects this system-level perspective: it does not reward driving harvest intensity $U/Y$ toward zero simply because the predator population is momentarily small relative to prey, since the ecological context (relative prey abundance) already provides recovery support.

\begin{figure}[t]
    \centering
    \includegraphics[width=0.8\linewidth]{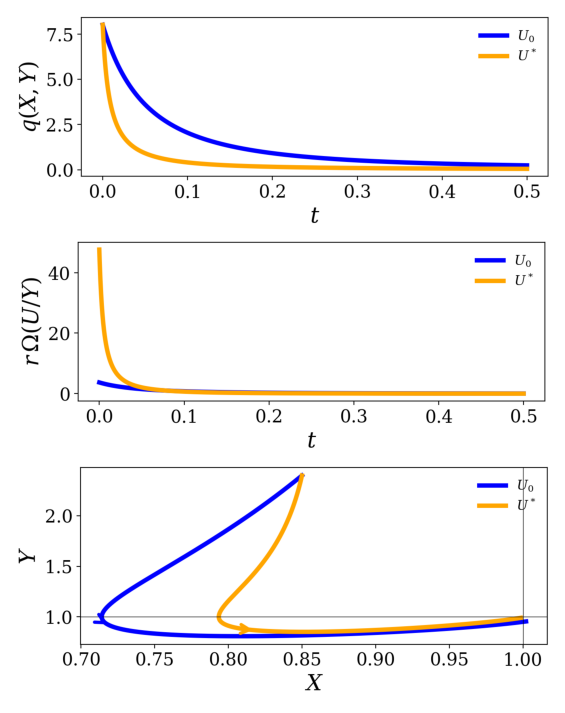}
    \caption{With initial prey slightly depleted and predator dominant, the expander $\Sigma$ makes the action of optimal control $U^*=Y \Sigma(Y/X)$ stronger than the dominant control $U_0=Y^2/X$ and reduces the excursion of the trajectory on the way to the equilibrium $X=Y=U=1$. Perhaps counterintuitively, the optimal controller, which harvests the predator more aggressively, prevents a deeper depletion of the prey than the nominal controller. While the control cost $r(X,Y) \Omega(X,Y)$ is higher for the optimal control initially, the state cost $q(X,Y)$ is lower for the duration of the transient, resulting in the overall cost $J^*$ that is lower than $J$ achieved under the nominal control, as predicted by the theory, and this reduction happens to be considerable: 20\%.}
    \label{fig:pred>prey}
\end{figure}

In the opposite regime, $U\gg Y \gg X$, the predator dominates the prey and is already being harvested at a high per-capita rate. One might expect that, since the predator is abundant, further harvesting should remain inexpensive. Instead, the marginal sensitivity of the cost on control, $\partial[r(X,Y)\Psi(U/Y)]/\partial U \sim Y/X\gg 1$, becomes large and positive because $\Pi(Y/X)$ is large while $1 - Y/U$ is close to its saturation value of one. The model therefore reacts strongly against additional increases in harvesting. In ecological terms, once harvesting intensity per predator is already high, the inverse optimal formulation treats further escalation as dangerous, even under predator dominance, thereby preventing overcorrection that could induce a deep depletion in the prey population.

When $Y\ll X$, $r(X,Y)$ is small, i.e., it appears that the inverse optimal formulation makes it cheap to harvest a relatively scarce predator. That is indeed the case and in fact reasonable: in this regime the abundance of the prey has a greater effect on the predator recovery and the system as a whole than the harvesting does.

\section{Simulations: Illustration of Optimality and Effect of the Expander}
\label{secIX}

The purpose of the simulations in this section is not to demonstrate stability --- which is already guaranteed by Theorems~\ref{thm-intensified} and \ref{sec-inv-opt} --- but to illustrate how inverse optimality manifests itself dynamically, and how the expander function $\Sigma$ reshapes the transient response relative to the nominal backstepping controller $U_0 = Y^2/X$.

Two initial conditions are chosen to provide insight.

\begin{figure}[t]
    \centering
    \includegraphics[width=0.8\linewidth]{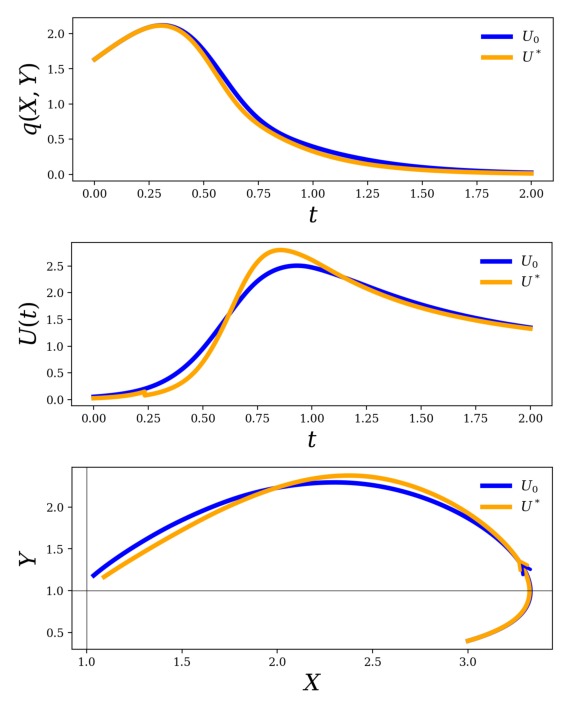}
    \caption{The initial condition with a depleted predator and an overpopulated prey is particularly challenging for the optimal controller because the initial expander $\Sigma(Y_0/X_0)= 0.16$ is half of  $Y_0/X_0=0.35$, making the initial optimal harvesting ``weak.'' The transients do not differ by much, but the control effort $U(t)$ differs, as expected: the optimal control is first weaker and later stronger than the nominal control. As expected from the theory, the overall cost $J^*$ of the optimal controller is lower, but only by about $3\%$, in this case, chosen to challenge the optimal controller.}
    \label{fig:prey>predator}
\end{figure}

First, we consider predator-dominant regimes with the prey initially depleted. In this case $Y/X > 1$, so the expander satisfies $\Sigma(Y/X) > Y/X$ and amplifies harvesting intensity. This regime highlights the aggressive behavior of the optimal controller and tests whether expanded harvesting prevents deeper prey depletion. Figure~\ref{fig:pred>prey} shows that the optimal feedback produces a visibly stronger initial control action, reduces the excursion of the trajectory in the $(X,Y)$-plane, and yields a substantially lower total cost $J^*$, in agreement with the Hamilton--Jacobi--Bellman characterization.

Second, we examine prey-dominant regimes with a depleted predator. Here $Y/X < 1$ and the expander $\Sigma$ attenuates the harvesting intensity, $\Sigma(Y/X) < Y/X$. This configuration is  selected as a stress test for the optimal controller: attenuation weakens the initial harvesting action relative to $U_0$, so any performance advantage must arise from the state-dependent weighting structure rather than brute-force amplification. As seen in Fig.~\ref{fig:prey>predator}, the state transients are nearly indistinguishable, while the control histories differ markedly in early time. The total cost improvement is smaller but remains strictly positive, consistent with Theorem~\ref{sec-inv-opt}.


\section{Two universal Formulae for Stabilization of Positive-Orthant Positive-Control Systems }
\label{secX-}

In this section we make our first step towards generalizing the framework from the predator-prey benchmark to broader systems affine in control. Under two different CLF assumptions, one less and the other more restrictive, we provide universal formulas for stabilization in the positive orthant by positive control. It is only for a more restricted class of systems that one can produce a universal formula that achieves not only stabilization but also inverse optimality.

\begin{theorem}[Univ. formula for stabilization---{\em general}]
\label{thm-univ0}
Consider the system 
\begin{equation}
\label{eq-xi-sys}
\dot\xi = f(\xi) + g(\xi)\,\omega, \qquad \xi \in (0,\infty)^n,
\end{equation}
with a positive scalar input $\omega \in (0,\infty)$ and 
such that
$f({\bf 1})+g({\bf 1})=0$.
Let $V:(0,\infty)^n\to\mathbb{R}_{\ge0}$ be $C^1$, positive definite with respect to ${\bf 1}$, proper in $(0,\infty)^n$, and let it satisfy the CLF condition
\begin{equation}
\label{eq-CLF-true}
L_gV(\xi)\ge 0 \;\Rightarrow\; L_fV(\xi)<0,
\qquad \forall \xi\in (0,\infty)\setminus {\bf 1}.
\end{equation}
The positive-valued feedback defined by $\omega_{\rm u}({\bf 1}) = 1$ and 
\begin{eqnarray}
\label{eq-univ-basic}
\omega_{\rm u}(\xi)
&=& 1- \frac{L_fV+ \sqrt{(L_fV)^2+(L_gV)^2}}{L_gV} 
\nonumber\\
& =& 1 - \frac{L_gV(\xi)}{r(\xi)} >0 
\,, \quad \xi\neq {\bf 1}
\end{eqnarray}
where
\begin{equation}
r = -L_fV+ \sqrt{(L_fV)^2+(L_gV)^2} > 0 \,, \quad \xi\neq {\bf 1}
\end{equation}
guarantees
\begin{eqnarray}
\dot V&=&
L_gV-\sqrt{(L_fV)^2+(L_gV)^2}
\nonumber\\
&=&-\frac{(L_fV)^2}{L_gV +\sqrt{(L_fV)^2+(L_gV)^2}}
<0,
\qquad \forall \xi\neq {\bf 1},
\end{eqnarray}
so the equilibrium $\xi={\bf 1}$ is globally asymptotically stable on $(0,\infty)^n$.
\end{theorem}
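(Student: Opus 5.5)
The plan is to verify the algebraic claims pointwise for $\xi\neq{\bf 1}$, and then to invoke a standard Lyapunov argument on the open orthant, using properness of $V$ in $(0,\infty)^n$ in place of radial unboundedness.

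\emph{Step 1: $r>0$ and the two forms of $\omega_{\rm u}$ agree.} Since $\sqrt{(L_fV)^2+(L_gV)^2}\ge |L_fV|$, we have $r\ge 0$. Equality $r=0$ would require $L_gV=0$ and $L_fV\ge 0$, which the CLF condition \eqref{eq-CLF-true} excludes for $\xi\neq{\bf 1}$. Hence $r>0$ off ${\bf 1}$. The identity
\begin{equation*}
\big(L_fV+\sqrt{(L_fV)^2+(L_gV)^2}\big)\,\big(-L_fV+\sqrt{(L_fV)^2+(L_gV)^2}\big)=(L_gV)^2
\end{equation*}
shows that the first expression in \eqref{eq-univ-basic} equals $1-L_gV/r$ whenever $L_gV\neq 0$. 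The second form $1-L_gV/r$ stays well defined where $L_gV=0$ and gives $\omega_{\rm u}=1$ there. I will take it as the definition of $\omega_{\rm u}$ on all of $(0,\infty)^n\setminus\{{\bf 1}\}$. It is continuous there, being built from $L_fV$, $L_gV$ and $r$, with $r>0$.

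\emph{Step 2: positivity of the control.} If $L_gV\le 0$, then $\omega_{\rm u}\ge 1>0$. If $L_gV>0$, the CLF condition gives $L_fV<0$, so $-L_fV>0$. Moreover $\sqrt{(L_fV)^2+(L_gV)^2}>L_gV$, and therefore $r>L_gV$, which gives $0<\omega_{\rm u}<1$. Thus $\omega_{\rm u}(\xi)>0$ everywhere.

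\emph{Step 3: decay of $V$.} Directly,
\begin{equation*}
\dot V=L_fV+L_gV\,\omega_{\rm u}=L_fV+L_gV-L_fV-\sqrt{(L_fV)^2+(L_gV)^2},
\end{equation*}
which is the first stated expression. Rationalizing by multiplying by $L_gV+\sqrt{\cdot}$ gives the second expression, valid wherever $L_gV+\sqrt{\cdot}>0$. To get $\dot V<0$ without dividing:
\begin{itemize}
\item if $L_gV\le 0$, the term $-\sqrt{\cdot}$ is strictly negative because $L_fV$ and $L_gV$ do not vanish together;
\item if $L_gV>0$, then $L_fV\neq 0$, so $\sqrt{\cdot}>L_gV$.
\end{itemize}

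\emph{Step 4: global asymptotic stability, the main obstacle.} The feedback is only guaranteed continuous on $(0,\infty)^n\setminus\{{\bf 1}\}$. Continuity at ${\bf 1}$, or a small-control property, is not assumed. I would argue as follows.
\begin{enumerate}
\item Solutions exist by Peano's theorem in the closed loop, since $\omega_{\rm u}$ is continuous away from ${\bf 1}$.
\item $V$ is nonincreasing along solutions, and $V$ is proper in $(0,\infty)^n$, so every trajectory stays in the compact sublevel set $\{V\le V(\xi_0)\}\subset(0,\infty)^n$. This gives forward invariance of the orthant and forward completeness.
\item $\dot V=-W(\xi)$ with $W$ continuous and positive on $(0,\infty)^n\setminus\{{\bf 1}\}$. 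Hence $W$ is bounded below by a positive constant on each compact annulus $\{\epsilon\le V\le V(\xi_0)\}$, which yields $V(\xi(t))\to 0$.
\item Lyapunov stability follows from the sublevel sets of $V$ shrinking to ${\bf 1}$, by positive definiteness and properness.
\end{enumerate}
If continuity at ${\bf 1}$ turns out to be needed for the existence of solutions through ${\bf 1}$, I would note that ${\bf 1}$ is an equilibrium under $\omega_{\rm u}({\bf 1})=1$ and restrict the argument to $\xi_0\neq{\bf 1}$.
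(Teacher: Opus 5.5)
Your proposal is correct and follows the paper's route. You compute $\dot V=L_gV-\sqrt{(L_fV)^2+(L_gV)^2}$ directly and obtain positivity of $\omega_{\rm u}$ from $r>L_gV$; the case split is only cosmetically different, since the paper splits on whether $L_gV<0$ or $L_fV<0$ and you split on the sign of $L_gV$. The extra pieces you supply are details the paper leaves implicit, not a different approach: $r>0$ off $\mathbf{1}$, the case $L_gV=0$, the observation that the rationalized form of $\dot V$ is $0/0$ where $L_fV=0>L_gV$, and the properness-based Lyapunov argument for global asymptotic stability (the paper is equally silent on the possible discontinuity of $\omega_{\rm u}$ at $\mathbf{1}$).
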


\begin{proof}
The expression for $\dot V$ is obtained by direct computation and its negative definiteness is evident, with $\dot V=0$ only when $L_fV=L_gV=0$, i.e., at $\xi={\bf 1}$. The control is positive because
${L_gV}<{r}$,
which follows from 
\begin{equation}
r-L_gV
=
\sqrt{(L_fV)^2+(L_gV)^2}-L_fV-L_gV
\,,\end{equation}
 the fact that, on the admissible set, either 
$-L_gV> 0 $ or $ -L_fV>0$, and tha fact that $\sqrt{(L_fV)^2+(L_gV)^2}-L_fV>0$ and $\sqrt{(L_fV)^2+(L_gV)^2}-L_gV>0$.
\end{proof}

Universal formula \eqref{eq-univ-basic} shifted as $\omega-1$ applies under the same CLF condition \eqref{eq-CLF-true},  ensures global asymptotic stabilization of ${\bf 1}$ in $(0,\infty)^n$, and is immensely simpler than \cite[(22), (19), (16]{LinSontag1995}. 

\begin{rem}
[Universal formula for predator-prey.]
\rm
\label{rem-pred-univ}
With \eqref{eq-LG-def} substituted into \eqref{eq-univ-basic}, we obtain
\begin{equation}
\label{eq-U-univ}
U_{\mathrm{u}}(X,Y)
=
1-\frac{G}{\sqrt{L^2+G^2}-L}\,,
\end{equation}
shown in Figure \ref{fig:U_univ}.
Along with the nominal $U_0= Y^2/X$ and, say, the simple inverse optimal $U^*=Y^3/X^2$ (obtained for $\Sigma(s)=s^2$)
? Unclear, \eqref{eq-U-univ} adds to the catalog of options. 
\hfill $\triangle$
\end{rem}

\begin{figure}[t]
    \centering
    \includegraphics[width=0.8\linewidth]{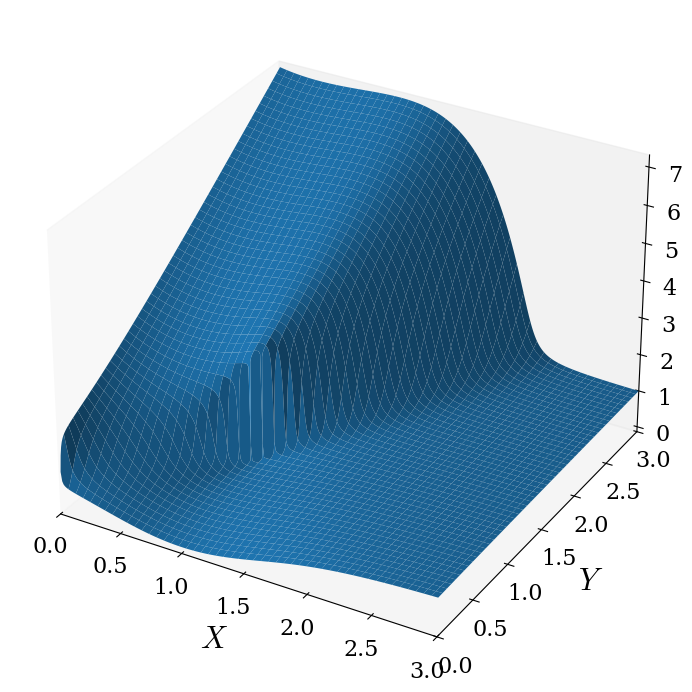}
    \caption{Universal control for predator-prey, $U_{\rm u}(X,Y)$  in \eqref{eq-univ-basic}. Harvesting $U_{\rm u}$ reaches its minimum = 0 at the extinct-predator state $Y=0$ (outside of the state space) and  equilibrium prey state $X=1$. Control $U_{\rm u}(X,Y)$ appears piecewise defined but this is the result of the denominator in \eqref{eq-U-univ} and the delicate cancellation at $X=Y=0$. Control $U_{\rm u}<1$ on the set $\{X<Y\} \supset \mathcal{S}_{+}$ shown in Fig. \ref{fig:triangle}.}
    \label{fig:U_univ}
\end{figure}

The stabilizing feedback  \eqref{eq-univ-basic}, even though it has $L_gV$ as a factor, is not inverse optimal with respect to a positive definite cost on $\xi$. This is disappointing, since Sontag's formula, and many of its extensions, are in fact inverse optimal. 

To show that it is possible to achieve an inverse optimal universal formula, we sacrifice the generality of the class of systems and trade the non-restrictive necessary-and-sufficient CLF condition $L_gV\geq 0 \Rightarrow L_fV<0$ for a more conservative CLF condition $L_gV\geq 0 \Rightarrow L_fV+L_gV <0$ in the next theorem, which achieves both global stabilization and inverse optimality. Without the CLF restriction, inverse optimality is achievable but the positivity of control is lost. 

\begin{theorem}[Inv. opt. univ. formula---w/ {\em CBF restriction}]
\label{thm:universal_positive}
Under the conditions of Theorem \ref{thm-univ0} but with $V$ satisfying not condition \eqref{eq-CLF-true} but the stronger condition
\begin{equation}
\label{eq-CLF-cond}
a(\xi) \geq 0
\;\Longrightarrow\;
b(\xi) < 0,
\qquad \forall\, \xi\in (0,\infty)\setminus {\bf 1}\,,
\end{equation}
where
\begin{subequations}
\label{eq-a-and-b}
\begin{eqnarray}
a(\xi) &:=& L_{f+g} V(\xi) 
\\
b(\xi) &:=&  L_gV(\xi)\,,
\end{eqnarray}
\end{subequations}
the universal feedback
\begin{equation}
\omega=\omega_\kappa(\xi)=1-
\kappa(\xi) \, 
\max\{0, -{\rm sgn}\{b(\xi)\} \}
\label{eq:control_clean_formal}
\end{equation}
with $\omega_\kappa({\bf 1})=1$ and 
\begin{equation}
\label{eq-kappa-univ}
\kappa(\xi) = \dfrac{a(\xi)+\sqrt{a^2(\xi)+b^2(\xi)}}{b(\xi)}\,, \quad b(\xi)\neq 0
\end{equation}
guarantees that $\dot V(\xi)<0$ for all $\xi\neq{\bf 1}$ and minimizes
\begin{eqnarray}
\label{eq-J-univ}
J&=& \int_0^\infty\frac{b^2(\xi)}{a(\xi)+\sqrt{a^2(\xi) + b^2(\xi)}}
\nonumber\\ && \times 
\left( 1+ \frac{ (\omega-1)^2}{\max\{0, -{\rm sgn}\{b(\xi)\}\}}\right) dt\,
\end{eqnarray}
over all non-negative stabilizing feedback laws, where, when $b(\xi)\ge 0$, the fraction $\frac{ (\omega-1)^2}{\max\{0, -{\rm sgn}\{b(\xi)\}\}}$ is interpreted as $+\infty$ when  $\omega\neq 1$ and as $0$ when $\omega=1$. 
Additionally, under \eqref{eq-CLF-cond}, $\omega$ is continuous on $(0,\infty)^n$.
\end{theorem}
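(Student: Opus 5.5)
Write $u:=\omega-1$, so that along \eqref{eq-xi-sys} we have $\dot V=L_fV+L_gV\,\omega=a+b\,u$ with $a,b$ as in \eqref{eq-a-and-b}. Set $s:=\sqrt{a^2+b^2}$. The key algebraic identity is $b^2=(s-a)(s+a)$, which gives
\[
\frac{b^2}{a+s}=s-a,\qquad \kappa=\frac{a+s}{b}=\frac{b}{s-a},
\]
and these forms remain meaningful when $b\to0$. The plan is: first check stabilization and positivity by splitting into $b\ge0$ and $b<0$, then verify the Hamilton--Jacobi--Bellman relation with value function $2V$, and finally check continuity.

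\emph{Stabilization and positivity.} If $b(\xi)\ge0$ and $\xi\neq\mathbf 1$, then $\omega_\kappa=1$, so $\dot V=a<0$ by \eqref{eq-CLF-cond}. If $b(\xi)<0$, then $\kappa<0$ because $a+s>0$. Hence $\omega_\kappa=1-\kappa>1>0$ and
\[
\dot V=a-(a+s)=-\sqrt{a^2+b^2}<0 .
\]
So $\dot V<0$ on $(0,\infty)^n\setminus\{\mathbf 1\}$. Since $V$ is proper and positive definite about $\mathbf 1$, global asymptotic stability on $(0,\infty)^n$ follows exactly as in Theorem~\ref{thm-univ0}.

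\emph{Positivity of the weights.} The integrand of \eqref{eq-J-univ} has state cost $q:=s-a$ and, on $\{b<0\}$, control weight $r:=s-a$. Both are positive away from $\mathbf 1$: if $b\neq0$ then $s>|a|\ge a$; if $b=0$ then $a<0$ by \eqref{eq-CLF-cond}, so $s-a=-2a>0$.

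\emph{HJB on $\{b<0\}$.} Minimizing $r u^2+2bu$ over $u$ gives $u^*=-b/(s-a)=-\kappa$, which is exactly $\omega_\kappa-1$. Substituting,
\[
q+ru^{*2}+2(a+bu^*)=(s-a)+(s+a)+2a-2(a+s)=0,
\]
using $\kappa^2(s-a)=s+a$. For any admissible feedback we can therefore complete the square:
\[
q+r u^2+2\dot V=r\,(u+\kappa)^2\ \ge 0 .
\]

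\emph{The region $\{b\ge0\}$.} Here the convention in \eqref{eq-J-univ} forces $\omega=1$ for finite cost, so $\omega_\kappa=1$ is the only finite-cost choice pointwise. The HJB residual is
\[
q+2a=s+a=\frac{b^2}{s-a}\ \ge 0,
\]
which is nonnegative but not zero. Integrating along any stabilizing non-negative feedback, using $V(\xi(t))\to0$, gives
\[
J(\omega)=2V(\xi_0)+\int_0^\infty\Big[\mathbf 1_{\{b<0\}}\,r\,(u+\kappa)^2+\mathbf 1_{\{b\ge0\}}\,(s+a)+\infty\cdot\mathbf 1_{\{b\ge0,\ \omega\neq1\}}\Big]dt .
\]
The claimed minimality of $\omega_\kappa$ is read off this decomposition: $\omega_\kappa$ kills every term that the control can influence pointwise.

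\emph{Main obstacle.} The residual $(s+a)\mathbf 1_{\{b\ge0\}}$ is nonnegative but trajectory-dependent, so this is only a pointwise (Hamiltonian) minimization and not a genuine trajectory-level minimization. To close this gap I would first try to redefine the state cost on $\{b\ge0\}$ as $-2a$ instead of $s-a$, making the HJB exact there. If that departs from \eqref{eq-J-univ} as stated, I would instead argue that on $\{b\ge0\}$ the dynamics are fixed by the cost convention, so any finite-cost competitor coincides with $\omega_\kappa$ there, and then compare trajectories via the completed square. The other delicate point is the wording ``minimizes $\ldots$ over all non-negative stabilizing feedback laws''; I would prove minimality in the completed-square sense just described.

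\emph{Continuity.} Away from $\{b=0\}\cup\{\mathbf 1\}$, continuity is immediate. As $b\to0^-$ with $a<0$, we have $\kappa=b/(s-a)\to0$, which matches $\omega=1$ on $\{b\ge0\}$. At $\xi=\mathbf 1$ we only have $|\kappa|\le|b|/s\le1$, which is bounded but not obviously vanishing. Continuity there would need a small-control-type argument; I would try to extract it from \eqref{eq-CLF-cond} near $\mathbf 1$, and otherwise flag it as requiring that extra property.
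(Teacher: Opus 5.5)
Your stabilization argument, and your continuity argument at points $\xi_0\neq\mathbf 1$ with $b(\xi_0)=0$ (via $\kappa=b/(s-a)\to0$), are the same as the paper's. For inverse optimality the paper only says that the proof ``mimics'' Theorem~3.2 of Krsti\'c--Li. Your completed square $q+ru^2+2\dot V=r(u+\kappa)^2$ on $\{b<0\}$, with $q=r=s-a$ and value function $2V$, is correct and more informative than that. The residual you isolate, $q+2a=s+a=b^2/(s-a)>0$ on $\{b>0\}$, is a genuine defect of the cost \eqref{eq-J-univ} as written, and the paper's one-line appeal does not address it. It already shows up in Example~\ref{ex-LinSon1}. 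For $x_0<0$ the convention forces $u=0$ on $\{x\le0\}$, so $\dot x=x^2$ and $J=\int_{x_0}^0(\sqrt{x^4+1}+x^2)|x|^{-1}\,dx=+\infty$ for every control, whereas $2V(x_0)=x_0^2$.

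\textbf{Your second way out fails.} Agreeing with $\omega_\kappa$ on $\{b\ge0\}$ does not fix how long, or where, a trajectory sits in $\{b>0\}$. Write $R(\xi):=\int_0^\infty\mathbf 1_{\{b>0\}}(s+a)\,dt$ along a trajectory. Your decomposition gives $J(\omega)=2V(\xi_0)+\int_0^\infty\mathbf 1_{\{b<0\}}\,r\,(u+\kappa)^2\,dt+R(\xi)$ and $J(\omega_\kappa)=2V(\xi_0)+R(\xi_\kappa)$. Nothing forces $R(\xi)\ge R(\xi_\kappa)$ for a competitor that deviates on $\{b<0\}$, so no comparison results.

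\textbf{Use the first fix, stated as a corrected theorem.} Take $q=-2a$ on $\{b\ge0\}$ and $q=s-a$ on $\{b<0\}$. This $q$ has the following properties:
\begin{itemize}
\item it is positive off $\mathbf 1$ by \eqref{eq-CLF-cond};
\item it is continuous across $\{b=0\}$, since $s-a=-2a$ there;
\item it makes the HJB exact everywhere;
\item it gives $J(\omega)\ge2V(\xi_0)=J(\omega_\kappa)$ for every nonnegative stabilizing $\omega$, with equality iff $\omega=\omega_\kappa$ a.e.
\end{itemize}
This is exactly how the paper itself treats $\{b\ge0\}$ in Theorem~\ref{thm-5}, where $q=-a$ when $r=+\infty$. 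In Example~\ref{ex-LinSon1} with $x_0<0$, the corrected cost gives $\int_{x_0}^0 2|x|\,dx=x_0^2=2V(x_0)$.

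\textbf{Continuity at $\mathbf 1$.} Your bound $|\kappa|\le|b|/s$ is wrong on $\{a>0\}$. There $s-a<s$, and in fact $|\kappa|=(a+s)/|b|>1$, unbounded whenever $a/|b|$ is. More importantly, continuity at $\mathbf 1$ cannot be extracted from \eqref{eq-CLF-cond}, because it fails in general. In Example~\ref{ex-LinSon1} the formula gives $u^*(x)=x^2+\sqrt{x^4+1}\to1$ as $x\to0^+$, while $u^*=0$ for $x\le0$. Typically $a$ vanishes to second order and $b$ to first order at $\mathbf 1$, so $|\kappa|\to1$ along directions with $b<0$. Using $b^2$ instead of Sontag's $b^4$ is what destroys the small-control behaviour. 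The continuity claim therefore holds only on $(0,\infty)^n\setminus\{\mathbf 1\}$, which is all that your argument, and the paper's, actually establish.
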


\begin{proof}
Fix $\xi\neq{\bf 1}$. If $b(\xi)\ge0$, then $\omega(\xi)=1$ by \eqref{eq:control_clean_formal} and hence $\dot V(\xi)=a(\xi)<0$ by the CLF assumption. If $b(\xi)<0$, denoting  $s=\sqrt{a^2+b^2}$, 
the control $\omega(\xi)=1+\frac{a+s}{-b}$ yields $\dot V(\xi)=a+b(\omega(\xi)-1)=a+b\frac{a+s}{-b}=a-(a+s)=-s$, i.e., 
\begin{equation}
\label{eq-Vdot-univ}
\dot V =-\sqrt{a^2+b^2}<0\,, \quad \forall \xi\neq\xi_{\rm e}\,.
\end{equation}
The proof of inverse optimality mimics the proof of Theorem 3.2 in \cite{661589}.
For proving continuity, note first that \eqref{eq-CLF-cond} is equivalent to the statement that $a(\xi_0)<0$ whenever $b(\xi_0)=0$ and $\xi_0\neq{\bf 1}$. 
Then, for such $\xi_0$ with $b(\xi_0)=0$ and $a(\xi_0)<0$, take $\xi_k\to\xi_0$ with $b(\xi_k)<0$ and denote $a_k=a(\xi_k)$, $b_k=b(\xi_k)$. For large $k$, $a_k<0$ and the identity $a_k+\sqrt{a_k^2+b_k^2}=\frac{b_k^2}{\sqrt{a_k^2+b_k^2}+|a_k|}$ yields $|\omega(\xi_k)-1|=\frac{a_k+\sqrt{a_k^2+b_k^2}}{-b_k}=\frac{-b_k}{\sqrt{a_k^2+b_k^2}+|a_k|}\leq \frac{|b_k|}{2|a_k|}\to0=\omega(\xi_0)-1$. 
\end{proof}

Theorem~\ref{thm:universal_positive} provides a universal feedback 
tailored to the state equilibrium at ${\bf 1}$ and input equilibrium at $1$, so that the deviation variable is $\omega-1$ and the shifted drift term is incorporated into the definition \eqref{eq-a-and-b}. 
The resulting feedback enforces \eqref{eq-Vdot-univ}
whenever $L_gV(\xi)<0$, while reducing to the equilibrium value $\omega=1$ when $L_gV(\xi)\ge 0$, a region in which the CLF condition guarantees $\dot V = L_fV(\xi)<-L_gV \leq 0$.




Our universal formula \eqref{eq-kappa-univ} looks nearly identical to Sontag's classical formula for input-unconstrained stabilization on $\mathbb{R}^n$ \cite{sontag1989universal}, the most obvious difference being our $b^2$ under the square root, and Sontag's $b^4$ under the square root. Sontag’s $b^4$ ensures continuity at $b=0$ in the unconstrained setting, whereas in the present positive-control setting  $b^2$ reflects the one-sided feasibility  rather than a removable singularity. The main difference, apart from the obvious fact that our $a(\xi)$ differs from the one in Sontag's problem by $+L_gV$, is that our set for $(L_fV+L_gV,L_gV)$ {\em excludes} the closed first quadrant $L_gV \geq 0, L_fV + L_gV \geq 0$, whereas Sontag's set excludes only the negative horizontal line $L_gV=0, L_fV(+L_gV)>0$, since the control can act both positively and negatively. Our ``projection operation'' $\min\{0, {\rm sgn}\{L_gV(\xi)\} \}$ in \eqref{eq:control_clean_formal},  deactivates the control formula $\kappa$ in the second quadrant of the $(a,b)$ plane, namely, for all $\xi$ where $L_gV(\xi)>0$.  

The inverse optimal result \eqref{eq-J-univ}, for our universal formula \eqref{eq:control_clean_formal}, \eqref{eq-kappa-univ}, is meaningful, since $\frac{b^2(\xi)}{a(\xi)+\sqrt{a^2(\xi) + b^2(\xi)}}$ is positive definite on the positive orthant (due to the CLF condition) and $\frac{ (\omega-1)^2}{\max\{0, -{\rm sgn}\{b(\xi)\}\}}$, the weight on the deviation of $\omega$ from its equilibrium value $1$, is infinite when $L_gV\geq 0$, which by \eqref{eq-CLF-cond} implies that $L_fV\leq L_fV + L_gV<0$, namely, that $\dot V$ is negative and control effort other than the equilibrium-keeping control $\omega =1 $ would be wasted.

Remark \ref{rem-S+} indicated the inapplicability of the  Sontag-Lin  formula \cite[(22), (19), (16]{LinSontag1995} to the predator-prey model with the CLF \eqref{V-bkst}. For the same reason, our inverse-optimal  formula \eqref{eq:control_clean_formal}, \eqref{eq-kappa-univ} is not applicable either. However, as pointed out in Remark \ref{rem-pred-univ}, our formula \eqref{eq-univ-basic} is applicable. 

\begin{example}{\bf (Alternative to Lin-Sontag \cite[(49)]{LinSontag1995}
\label{ex-LinSon1})}
\rm Theorem \ref{thm:universal_positive} is written for systems 
in the positive orthant, equilibrium at ${\bf 1}$, positive control, and control equilibrium at $1$. It is also applicable to systems in $\mathbb{R}^n$, equilibrium at zero for zero control, and control non-negative. This is the subject of \cite{LinSontag1995}---treated there with positive control. For the  example $\dot x = x^2 - u$ with $u>0$ a feedback law \cite[(49)]{LinSontag1995} is given by
\begin{eqnarray}
    u&=& \frac{3 \pi }{4(2\pi - \theta)} \Bigg[ 
    \theta-\frac{\pi}{2}
    + \frac{2\theta - 3\pi}{\pi}\arctan\frac{2\theta-3\pi}{|x|\sqrt{x^2+1}}
    \nonumber\\
    && -2\arctan\frac{\pi}{|x|\sqrt{x^2+1}}\Bigg]
    \\
    \theta &=& -\arctan\frac{1}{x^2} + \pi +\begin{cases}
        \pi,& x>0\\
        0, & x<0\,.
    \end{cases}
\end{eqnarray}
Our  Theorem \ref{thm:universal_positive}
yields the simple controller
\begin{equation}
u^*(x)
=
\begin{cases}
x^2+\sqrt{x^4+1}, & x>0\\
0, & x\le0
\end{cases}
\end{equation}
which is not only stabilizing with  
\begin{equation}
\dot V(x)
=
-|x|
\begin{cases}
\sqrt{x^4+1}, & x>0\\
x^2, & x\le0
\end{cases}
\end{equation}
but also optimal with state cost 
$q(x)
=
\frac{|x|}{x|x|+\sqrt{x^4+1}}$ and weight 
\begin{equation}
r(x)
=
\begin{cases}
q(x), & x>0\\
+\infty, & x\le0
\end{cases}
\end{equation}
on the control cost $u^2$.
\hfill $\triangle$
\end{example}

\section{General Expander-Parametrized Positive Inverse-Optimal Stabilizers} 
\label{secX}

The universal formulae in Section \ref{secX-} are 
are only examples of feedback laws, where the user still has to first design a CLF-affirming feedback. 
The predator-prey benchmark 
provides a roadmap for  a generalized redesign framework, introduced in the next theorem.




\begin{theorem}[Expander-based inv. optimal {\em redesign}]
\label{thm3}
Let 
\begin{itemize}
\item 
$\omega_0:(0,\infty)^n \to (0,\infty)$ satisfy $\omega_0(\mathbf{1}) = 1$ and the strict CLF inequality
\begin{equation}
\label{eq-q-def-xi}
q(\xi) := -L_f V(\xi) - L_g V(\xi)\,\omega_0(\xi) > 0,
\qquad \forall\, \xi \neq \mathbf{1}.
\end{equation}
\item 
$\Sigma:(0,\infty)\to(0,\infty)$ be $C^1$, strictly increasing, onto, normalized by $\Sigma(1)=1$, and satisfy the sign-alignment condition
\begin{equation}
\label{def-sign-condition}
\fbox{$\operatorname{sign}\!\Big(\Sigma\big(\omega_0(\xi)\big)-1\Big)
=-\operatorname{sign}\!\big(L_g V(\xi)\big),
\quad \forall\, \xi \neq \mathbf{1}$}\,.
\end{equation}
\end{itemize}
Then, for every such $\Sigma$, the feedback
\begin{equation}
\label{eq-expander-redesign}
\omega^*(\xi) := \Sigma\big(\omega_0(\xi)\big)
\end{equation}
minimizes, over all globally asymptotically stabilizing controllers, the infinite-horizon functional
\begin{equation}
J(\omega)
=
\int_0^\infty
\Big(
q(\xi(t))
+
r(\xi(t))\,\Psi(\omega(t))
\Big)\,dt
\end{equation}
along the solutions of \eqref{eq-xi-sys}, where $\Psi:(0,\infty)\to\mathbb{R}_{\ge 0}$ is the strictly convex penalty (unique up to a positive scale factor) defined implicitly by
\begin{equation}
\label{eq-Psi-Sigma-def}
\frac{\Psi(s)}{\Psi'(s)} = s-\Sigma^{-1}(s) ,
\qquad s \neq 1,
\qquad \Psi(1) = 0,
\end{equation}
the weight $r(\xi)$ is strictly positive on $(0,\infty)^n$ 
(with a positive extended-real limit at $\xi=\mathbf{1}$), and the product $r(\xi)\Psi(\omega)$, equivalently expressed as
\begin{eqnarray}
\label{eq-67}
r(\xi)\Psi(\omega)
&=&
L_g V(\xi)\Big(\omega_0(\xi)-\Sigma(\omega_0(\xi))\Big)
\nonumber\\
&& \times\exp\!\left(
\int_{\omega_0(\xi)}^{\Sigma^{-1}(\omega)}
\frac{\Sigma'(\sigma)}{\Sigma(\sigma)-\sigma}\,d\sigma
\right),
\end{eqnarray}
 is strictly convex in $\omega$ for all $\xi \in (0,\infty)^n$ and strictly positive for all $\omega \neq 1$. Moreover, $V$ is the value function, 
\begin{equation}
J^*=V(\xi_0)
\end{equation}
and the optimal closed loop 
is globally asymptotically stable on  $(0,\infty)^n$ at $\xi=\mathbf{1}$.
\end{theorem}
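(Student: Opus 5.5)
The plan is to mirror the proof of Theorem~\ref{sec-inv-opt}, with $\Theta:=\Sigma^{-1}$ playing the role of the contractor and $\omega_0$ the role of $U_0/Y$.

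\textbf{Step 1: the penalty $\Psi$.} Since $\Sigma$ is $C^1$, strictly increasing, onto and satisfies $\Sigma(1)=1$, the inverse $\Theta$ has the same properties. The ODE \eqref{eq-Psi-Sigma-def} is exactly \eqref{eq-Theta-to-Psi}, so I would build $\Psi$ through \eqref{eq-Psidef}. The scale on each side of $1$ is fixed by $s_0^\pm$, which gives uniqueness up to scaling.

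To invoke Lemma~\ref{lem-Theta-Psi}, I need the contraction property $(\Theta(s)-s)(s-1)<0$, equivalently the expansion property of $\Sigma$. This does not follow from the sign-alignment condition \eqref{def-sign-condition} alone. I would therefore either take the conditions of Lemma~\ref{lem-Theta-Psi} on $\Theta=\Sigma^{-1}$ as implicit in ``$\Psi$ strictly convex,'' or at least verify convexity locally. I would then record the properties A1--A5: $\Psi\ge0$, $\Psi(1)=0$, $\operatorname{sign}\Psi'(s)=\operatorname{sign}(s-1)$, and strict convexity.

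\textbf{Step 2: the weight $r$ and the closed form \eqref{eq-67}.} Define
\[
r(\xi):=-\frac{L_gV(\xi)}{\Psi'(\Sigma(\omega_0(\xi)))}.
\]
By \eqref{def-sign-condition}, $\operatorname{sign}\Psi'(\omega^*)=\operatorname{sign}(\omega^*-1)=-\operatorname{sign}L_gV$, so $r>0$ wherever $L_gV\neq0$. Where $L_gV=0$ (in particular at $\mathbf 1$), I would take $r$ to be the extended-real limit, arguing as in Theorem~\ref{sec-inv-opt} that it is positive because numerator and denominator share a sign nearby.

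For \eqref{eq-67}, I would use $\Psi/\Psi'=\mathrm{Id}-\Theta$ evaluated at $\omega^*$:
\[
r\,\Psi(\omega^*)=-L_gV\,(\omega^*-\omega_0)=L_gV\,(\omega_0-\Sigma(\omega_0)).
\]
Then I integrate $\Psi'/\Psi=1/(s-\Theta(s))$ from $\omega^*$ to $\omega$ and substitute $s=\Sigma(\sigma)$, with $ds=\Sigma'(\sigma)\,d\sigma$ and $s-\Theta(s)=\Sigma(\sigma)-\sigma$. This yields the exponential factor in \eqref{eq-67}, whose limits are $\omega_0$ and $\Sigma^{-1}(\omega)$. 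Convexity and positivity of $r\Psi$ for $\omega\ne1$ then follow from $r>0$ and Step 1.

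\textbf{Step 3: HJB and optimality.} Set
\[
H(\xi,\omega)=q+L_fV+L_gV\,\omega+r\Psi(\omega).
\]
It is strictly convex in $\omega$, and $\partial_\omega H=L_gV+r\Psi'(\omega)$ vanishes at $\omega^*$ by the definition of $r$, so $\omega^*$ is the unique minimizer over $\omega>0$. Evaluating at the minimizer,
\[
L_gV\,\omega^*+r\Psi(\omega^*)=L_gV\bigl(\omega^*-(\omega^*-\Theta(\omega^*))\bigr)=L_gV\,\omega_0,
\]
so $\min_\omega H=q+L_fV+L_gV\omega_0=0$ by \eqref{eq-q-def-xi}.

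For any GAS admissible $\omega(\cdot)$, we have $q+r\Psi(\omega)=H-\dot V\ge -\dot V$. Integrating along the trajectory and using $V(\xi(t))\to0$ gives $J\ge V(\xi_0)$, with equality for $\omega^*$. Along the closed loop, $\dot V=-q-r\Psi(\omega^*)\le -q<0$ for $\xi\ne\mathbf 1$. Together with $V$ positive definite and proper on $(0,\infty)^n$, this gives GAS on $(0,\infty)^n$.

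\textbf{Main obstacle.} I expect the delicate points to be the regularity of $\Psi$ at $s=1$ and of $r$ on the set $\{L_gV=0\}$, where $r$ may be $+\infty$. There I would interpret $r\Psi(\omega)$ through \eqref{eq-67}, which stays finite and equals $0$ when $L_gV=0$ and $\omega=\omega^*=1$. I would also justify the limit $V(\xi(t))\to0$ for the comparison class, which is immediate because the class is restricted to stabilizing controllers.
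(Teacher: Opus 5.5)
Your proposal is correct and is essentially the paper's proof. It uses the same weight $r=-L_gV/\Psi'(\omega^*)$, the same identity $\Psi(\omega^*)=(\omega^*-\omega_0)\Psi'(\omega^*)$, a convexity argument for optimality, the substitution $s=\Sigma(\sigma)$ for \eqref{eq-67}, and $\dot V=-q-r\Psi(\omega^*)$ for stability; you phrase optimality through the Hamiltonian as in Theorem~\ref{sec-inv-opt} and the paper through the supporting-line inequality at $\omega^*$, and your direction $q+r\Psi(\omega)\ge-\dot V$ is the right one (the paper's displayed ``$\dot V\le -q-r\Psi(\omega)$'' is a slip for $\ge$). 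Your Step~1 caveat is also well founded: the sign-alignment condition reduces to $\operatorname{sign}(\omega_0-1)=-\operatorname{sign}(L_gV)$, yet a strictly convex $\Psi\ge0$ with minimum at $1$ can satisfy \eqref{eq-Psi-Sigma-def} only if $\Sigma(s)>s$ for $s>1$ and $\Sigma(s)<s$ for $s<1$, so the expander property must be taken as an implicit global hypothesis (your first option, not a local convexity check), which is what the paper does tacitly when it invokes Lemma~\ref{lem-Theta-Psi} for $\Theta=\Sigma^{-1}$.
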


\begin{proof}
From \eqref{eq-xi-sys}, \eqref{eq-q-def-xi},
\begin{equation}
\dot V
=
L_f V + L_g V\,\omega
=
- q(\xi) + L_g V(\xi)\,(\omega - \omega_0(\xi)).
\end{equation}
Add and subtract $\omega^*$:
\begin{equation}
\dot V
=
- q(\xi)
+
L_g V(\xi)\,(\omega - \omega^*)
+
L_g V(\xi)\,(\omega^* - \omega_0).
\end{equation}
From the defining identity \eqref{eq-Psi-Sigma-def} 
evaluated at $s=\omega^*$ and using $\Sigma^{-1}(\omega^*)=\omega_0$, we obtain
\begin{equation}
\label{eq-72}
\Psi(\omega^*)
=
(\omega^* - \omega_0)\,\Psi'(\omega^*).
\end{equation}
Define
\begin{equation}
\label{eq-73}
r(\xi)
=
- \frac{L_g V(\xi)}{\Psi'(\omega^*(\xi))}.
\end{equation}
Using \eqref{def-sign-condition}, $\omega^*=\Sigma(\omega_0)$, and $\operatorname{sign}\Psi'(s)=\operatorname{sign}(s-1)$, we get $L_gV(\xi)\Psi'(\omega^*(\xi))<0$ for $\xi\neq\mathbf1$, hence $r(\xi)=-L_gV(\xi)/\Psi'(\omega^*(\xi))>0$ on $(0,\infty)^n\setminus\{\mathbf1\}$.
Moreover, since $r(\xi)>0$ for every $\xi\neq\mathbf 1$ and the numerator and denominator in \eqref{eq-73} have the same sign in a neighborhood of $\mathbf 1$, it follows that any extended-real limit value of $r(\xi)$ as $\xi\to\mathbf 1$ is positive (possibly $+\infty$), and therefore $r(\mathbf 1)>0$ in the extended-real sense.
Next, from \eqref{eq-72}, \eqref{eq-73},
\begin{equation}
\label{eq-74}
L_g V(\xi)\,(\omega^* - \omega_0)
=
- r(\xi)\,\Psi(\omega^*(\xi)).
\end{equation}
Substituting,
\begin{equation}
\dot V
=
- q(\xi)
+
L_g V(\xi)\,(\omega - \omega^*)
-
r(\xi)\,\Psi(\omega^*(\xi)).
\end{equation}
The construction \eqref{eq-Psi-Sigma-def}, equivalent to \eqref{eq-Theta-to-Psi} with $\Theta=\Sigma^{-1}$, yields a $C^1$ strictly convex $\Psi$ with unique minimizer at $1$, hence $\operatorname{sign}\Psi'(s)=\operatorname{sign}(s-1)$ for all $s\neq 1$.
By convexity of $\Psi$,
\begin{equation}
\Psi(\omega)
\ge
\Psi(\omega^*)
+
\Psi'(\omega^*)\,(\omega - \omega^*).
\end{equation}
Multiplying by $r(\xi)$ and using the definition of $r$,
\begin{equation}
L_g V(\xi)\,(\omega - \omega^*)
\ge
r(\xi)\,\Psi(\omega^*)
-
r(\xi)\,\Psi(\omega).
\end{equation}
Hence
\begin{equation}
\dot V
\le
- q(\xi)
-
r(\xi)\,\Psi(\omega).
\end{equation}
For $\xi\neq\mathbf1$, equality holds if and only if $\omega=\omega^*(\xi)$. Integrating over $[0,\infty)$ gives
\begin{align}
&\int_0^\infty\big(q(\xi(t))+r(\xi(t))\,\Psi(\omega(t))\big)\,dt
\\
&\ge V(\xi(0))-\lim\inf_{t\rightarrow\infty} V(\xi(t)) \ge V(\xi(0)),
\end{align}
with equality precisely for $\omega=\omega^*$. 
To derive \eqref{eq-67}, 
let $s=\Sigma(\sigma)$. Since $\Sigma^{-1}(\Sigma(\sigma))=\sigma$, this yields
${\Psi'(\Sigma(\sigma))}/{\Psi(\Sigma(\sigma))} = {1}/{(\Sigma(\sigma)-\sigma)}$
from \eqref{eq-Psi-Sigma-def}.
Differentiating $\ln \Psi(\Sigma(\sigma))$ with respect to $\sigma$ gives
\begin{equation}
\frac{d}{d\sigma}\ln \Psi(\Sigma(\sigma))
=
\frac{\Psi'(\Sigma(\sigma))}{\Psi(\Sigma(\sigma))}\,\Sigma'(\sigma)
=
\frac{\Sigma'(\sigma)}{\Sigma(\sigma)-\sigma}.
\end{equation}
Integrating from $\sigma=\omega_0(\xi)$ to $\sigma=\Sigma^{-1}(\omega)$ and using $\Sigma(\omega_0(\xi))=\omega^*(\xi)$,
\begin{equation}
\label{eq-82}
\Psi(\omega)
=
\Psi(\omega^*(\xi))\,
\exp\!\left(
\int_{\omega_0(\xi)}^{\Sigma^{-1}(\omega)}
\frac{\Sigma'(\sigma)}{\Sigma(\sigma)-\sigma}\,d\sigma
\right).
\end{equation}
Multiplying \eqref{eq-82} by $r(\xi)$ and substituting \eqref{eq-74} yields \eqref{eq-67}.
Under $\omega=\omega^*$,
\begin{equation}
\dot V
=
- q(\xi)
-
r(\xi)\,\Psi(\omega^*(\xi))
<
0
\quad
(\xi\neq \mathbf{1}),
\end{equation}
so $\mathbf{1}$ is globally asymp. stable and $V$ is the value function.
\end{proof}

\vspace{-3mm}\paragraph*{Design pathway.} As difficult s designing simultaneously a strict CLF and a globally stabilizing controller to satisfy \eqref{eq-q-def-xi}, even more difficult is ensuring that such a $(V,\omega_0)$ pair lends itself to satisfying condition \eqref{def-sign-condition} of sign-compatibility for some admissible (increasing, invertible, expansive) $\Sigma$. 

To get a taste of what it takes to satisfy  \eqref{def-sign-condition}, we remind the reader that, for the predator-prey model, it becomes  
$X>Y \;\Longleftrightarrow\; \Sigma\!\left({Y}/{X}\right)<1$ and $
X<Y \;\Longleftrightarrow\; \Sigma\!\left({Y}/{X}\right)>1$.
While seemingly simple, meeting the sign-compatibility is anything but easy: thanks to a carefully constructed pair comprising  the backstepping-design Volterra-Lyapunov $V$ with backstepping designed nominal feedback $\omega_0 = Y/X$ . 

Condition \eqref{def-sign-condition} is best understood as the positive-input counterpart of the classical 
negation of $L_gV$ in \eqref{eq-xdot=f+gu}. 
For LTI systems, this ``negative of $L_gV$'' form has been known since Kalman answered the question "When is a linear control system optimal?" \cite{10.1115/1.3653115} with his celebrated {\em inverse-optimal} feedback form $u= -R^{-1} B^{-1} P x$, where $P$ is a solution of not a Riccati but a Lyapunov algebraic equation. 

\vspace{-3mm}\paragraph*{Reciprocal symmetry.} The function 
$(\ell\gamma)' \circ \left({\ell\gamma}/{\mathrm{Id}}\right)^{-1}$ in \eqref{eq-inv-opt-class-u*-altern2} is the analog
of the expander $\Sigma$, and $
\left({\ell\gamma}/{\mathrm{Id}}\right)
\circ
\big((\ell\gamma)'\big)^{-1}$ the analog of the contractor $\Theta$. The difference that arises in positive systems is that, while $(\ell\gamma)' \circ \left({\ell\gamma}/{\mathrm{Id}}\right)^{-1}$ is odd in a real variable (nominal sign-unconstrained control),  expander $\Sigma$ is, in general, not symmetric relative to $s=1$ on $(0,\infty)$. 

However, `unconventionally symmetric' $\Sigma$'s  exist. A {\em ``reciprocally symmetric''} $\Sigma(\sigma)={1}/{\Sigma\!\big(\sigma/\Sigma(\sigma)\big)},
\ \forall\,\sigma>0$ results in a control cost $\Psi$ that is not only strictly convex but itself satisfies the reciprocal symmetry of the form $\Upsilon(s) + \Upsilon(1/s)=1$, where $\Upsilon(s) = \dfrac{\Psi(s)}{(s-1)\Psi'(s)}$. The fact is proven with $\sigma = \Theta(s)$. This---reciprocal---form of symmetry is interesting because it holds for $\Sigma = \Theta^{-1}$ in \eqref{eq-def-Theta-Volterra} and $\Psi=\Omega $ defined by \eqref{eq-Volterra-Lyapunov}, as well as for $\Sigma(s)=s^2$ and the corresponding $\Psi(s)=(\sqrt{s}-1)^2$. 

\vspace{-3mm}\paragraph*{Expander with an ordinary {\em odd} symmetry in the universal formula.} If one takes the universal formula \eqref{eq:control_clean_formal}, \eqref{eq-kappa-univ} divided by {\em two}, and appropriately shifted by $1$, namely, 
\begin{equation}
\omega_0 = \omega_{\kappa/2}(\xi) := 1 + \frac{1}{2}(\omega_\kappa(\xi)-1)\,,
\label{eq:control_clean_formal-1/2}
\end{equation}
this half-universal feedback can be verified to guarantee
\begin{equation}
\dot V = -\frac12\sqrt{a^2+b^2}<0\,, \quad \forall \xi\neq\xi_{\rm e}\,,
\end{equation}
This is a signature of inverse optimality of $\omega_\kappa$, 
established in Theorem \ref{thm:universal_positive}. The question then is how to interpret this contraction relation between $\omega_\kappa$ and $\omega_{\kappa/2}$ using Theorem \ref{thm3}, and specifically with its expander redesign formula \eqref{eq-expander-redesign}. The answer is simple---by combining \eqref{eq:control_clean_formal-1/2}, 
\eqref{eq-expander-redesign}, one gets
\begin{equation}
\omega_\kappa = 1 + 2 (\omega_{\kappa/2} -1)\,,
\end{equation}
namely, the redesign formula \eqref{eq-expander-redesign} holds with the {\em expander}
\begin{equation}
\Sigma(s)=2s\,,
\end{equation} 
a $\mathbb{R}\rightarrow \mathbb{R}$ function with odd symmetry around $s=0$. It is largely the anticipation of this feature of universal control---nor merely producing a universal formula but identifying it as a particular case of the inverse optimal redesign in Theorem \ref{thm3}---that has driven us to develop Section \ref{secX-}.

\section{Exploring the Parametrizing Families of Contractor Functions and Control Penalties}
\label{secVIII}

Theorems \ref{sec-inv-opt} and \ref{thm3} present a broad parametrization of optimal controllers $U^*= Y 
\Sigma(Y/X)$, resp. $\omega*=\Sigma(\omega_0)$,  and control cost functions $\Psi(U/Y)$, resp. $\Psi(\omega)$, in terms of contractors $\Theta$. This generality deserves a bit of illustration.

We carefully choose three examples of contractor functions with different behaviors as $s\rightarrow 0$ and $s\rightarrow\infty$:
\begin{subequations}
\label{eq-Theta0i}
\begin{eqnarray}
\label{eq-Theta01}
\Theta_1(s)&=&\frac{s\ln s}{ s-1}
\\
\label{eq-Theta02}
\Theta_2(s)&=&\sqrt{s}
\\
\label{eq-Theta03}
\Theta_3(s)&=&\frac{2s-\sqrt{s^2+1}+1}{3-\sqrt{2}}\,.
\end{eqnarray}
\end{subequations}
The asymptotic behaviors of $\Theta_1, \Theta_2, \Theta_3$ are
\begin{subequations}
\label{eq-Theta0i+}
\begin{eqnarray}
\label{eq-Theta01+}
\Theta_1\sim s|\ln s| \quad (s\to 0^+),
&&
\Theta_1\sim \ln s \quad (s\to\infty)\qquad
\\
\label{eq-Theta02+}
\Theta_2\sim s^{1/2} \quad (s\to 0^+),
&&
\Theta_2\sim s^{1/2} \quad (s\to\infty)
\\
\label{eq-Theta03+}
\Theta_3\sim c\,s \quad (s\to 0^+),
&&
\Theta_3\sim \kappa\, s \quad (s\to\infty)
\\
c=\frac{2}{3-\sqrt{2}},
&&
\kappa=\frac{1}{2(3-\sqrt{2})}
\end{eqnarray}
\end{subequations}
namely, $\Theta_3, \Theta_1, \Theta_2$ are, respectively, the least, medium, and most aggressively growing near $s=0$, whereas $\Theta_1, \Theta_2, \Theta_3$ are, respectively, the least, medium, and most aggressively growing for large $s$. These functions' trends act in an inverse fashion when employed in the feedback \eqref{eq-optimal-U*}
through \eqref{eq-def-Sigma}. 

\begin{figure}[t]
    \centering
    \includegraphics[width=0.8\linewidth]{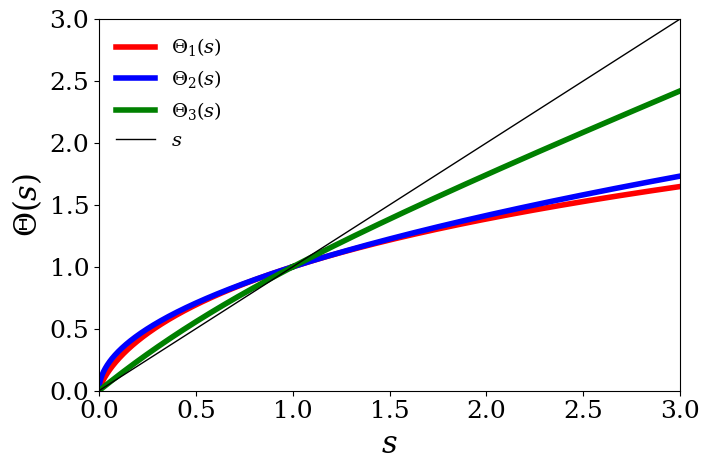}
    \caption{The ``contractor'' nonlinearities $\Theta_i$ defined in \eqref{eq-Theta0i}, with their growth trends quantified in \eqref{eq-Theta0i+}.}
    \label{fig:Theta}
\end{figure}

\begin{figure}[t]
    \centering
    \includegraphics[width=0.8\linewidth]{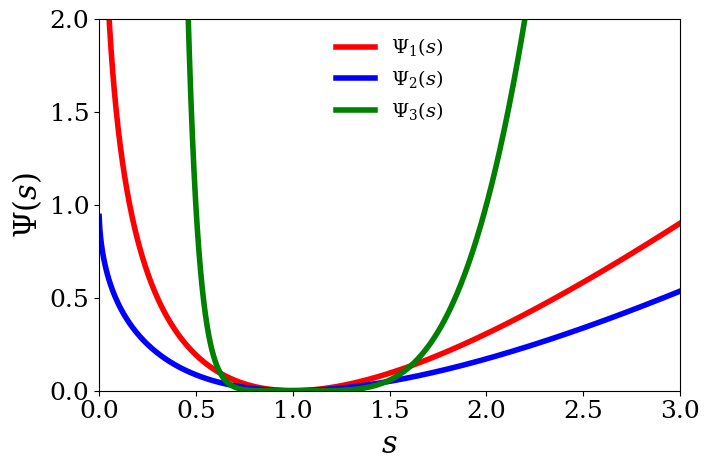}
    \caption{The control cost functions $\Psi_i$, obtained from the respective $\Theta_i$, using \eqref{eq-Theta-to-Psi}. Both $\Psi_1$ and $\Psi_3$ have barrier behavior at $s=0$, but $\Psi_2$ does not. Of the three, $\Psi_1$ and $\Psi_1$ have the least aggressive growth for large $s$ --- linear.}
    \label{fig:Psi}
\end{figure}

The corresponding $\Psi$-functions, defined by \eqref{eq-Psidef}, are
\begin{subequations}
\label{eq-Psi0}
\begin{eqnarray}
\label{eq-Psi01}
    \Psi_1(s)&=& s-1-\ln s,
\\
\label{eq-Psi02}
\Psi_2(s)&=&(\sqrt{s}-1)^2
\\
\nonumber
\label{eq-Psi03}
\Psi_3(s) &=& \mbox{see \eqref{eq-Theta-to-Psi}, not available in closed form,}
\\
\nonumber&& \mbox{but see \eqref{eq-Psi0+}}
\end{eqnarray}
\end{subequations}    
Their asymptotic behaviors are, for $\Psi_1$,
\begin{subequations}
\begin{eqnarray}
\Psi_1(0^+)&=&+\infty,
\qquad
\Psi_1(s)\sim |\ln s| \ (s\to 0^+),
\\ 
\Psi_1(s)&\sim& s \ (s\to\infty),
\\ 
\Psi_1(s)&\sim& \tfrac12 (s-1)^2 \ (s\to 1)
\end{eqnarray}
\end{subequations}
for $\Psi_2$,
\begin{subequations}
\begin{eqnarray}
\Psi_2(0^+)&=&1,
\qquad 
\Psi_2(s)=1-2\sqrt{s}+s \ (s\to 0^+),
\\ 
\Psi_2(s)&\sim& s \ (s\to\infty),
\\ 
\Psi_2(s)&\sim& \tfrac14 (s-1)^2 \ (s\to 1)
\end{eqnarray}
\end{subequations}
and for $\Psi_3$,
\begin{subequations}
\label{eq-Psi0+}
\begin{eqnarray}
\Psi_3(0^+)&=&+\infty,
\qquad 
\Psi_3(s)\sim C\, s^{-(1+2\sqrt{2})} \ (s\to 0^+),
\\ 
\Psi_3(s)&\sim& C\, s^{(22+2\sqrt{2})/17} \ (s\to\infty),
\\ 
\Psi_3(s)&\sim& C\, |s-1|^{\,4+\sqrt{2}} \ (s\to 1).
\end{eqnarray}
\end{subequations}
As evident in the equations above, as well as shown in Figure \ref{fig:Psi}, both $\Psi_1$ and $\Psi_3$ have barrier behavior at $s=0$, but $\Psi_2$ does not. Of the three, $\Psi_1$ and $\Psi_2$ has the least aggressive growth for large $s$ --- linear --- while $\Psi_3$ is superlinear. Around the minimum, $\Psi_1$ and $\Psi_2$ are quadratic in $s-1$, whereas $\Psi_3$ is more than twice as flat. 

In summary, the contractor $\Theta$, 
spans optimal controllers that are cautious or aggressive, near or far from  setpoint, with meaningful costs on control.

\section{Constructive General Inverse Optimal Stabilizers}
\label{secX+}

In Section \ref{secX} we parametrized the inverse optimal stabilizing controllers using an expander function $\Sigma$. That is a very general framework, useful when the $L_gV$-directionality condition \eqref{def-sign-condition} is satisfied by the nominal controller $\omega_0$ and the expander, but 
a fully constructive framework is preferable, in which a controller is {\em directly designed to be stabilizing and inverse optimal}. Those two properties are provided by the single universal controllers \eqref{eq:control_clean_formal}, \eqref{eq-kappa-univ}, however, one wants a broader performance-shaping catalog of controllers. 

The constructive framework we are after in this section is equivalent to designing \eqref{eq-inv-opt-class-u*}, 
simultaneously  over $(\gamma,r)$. We provide a design with simultaneous choices of $(\Theta,r)$. The contractor $\Theta$ is chosen first, followed by a choice of the control weight $r$, obtaining in the end both an inverse optimally stabilizing controller, and a controller that is merely stabilizing and analogous to the half-universal formula $\omega_{\kappa/2}$. 

Consider system \eqref{eq-xi-sys} and the associated
\begin{equation}
\label{eq-Vdot-r-shape}
\dot V = a + b (\omega-1)\,, 
\end{equation}
where $a(\xi),b(\xi)$ are defined in \eqref{eq-a-and-b}, and denote by  $S \subset \mathbb{R}^2$  the set where $b \ge 0 \Rightarrow a < 0$. Although the designs are expressed in terms of Lie-derivatives  $(a,b)$, the controllers are feedbacks of the state $\xi\in(0,\infty)^n$. 
Compared to Theorem \ref{thm3}, the design in the next theorem is 
more restrictive, 
with $L_gV\geq 0 \Rightarrow L_fV+L_gV <0$, as in the universal  Theorem \ref{thm:universal_positive}, 
to make optimality possible with positive control. 

Theorem \ref{thm-5} offers far more than Theorem \ref{thm:universal_positive}: controller family parametrized with contractor  $\Theta$. 

\begin{theorem}[Inv. opt. {\em direct} design---w CBF restriction]
\label{thm-5}
Let 
\begin{itemize}
\item the strictly increasing function $\Theta:(0,\infty)\to(0,\infty)$ satisfy the conditions in Lemma \ref{lem-Theta-Psi};  
\item  the strictly convex $\Psi:(0,\infty)\to(0,\infty)$ be mapped from $\Theta$ in accordance with \eqref{eq-Psidef} and with the properties established in Lemma \ref{lem-Theta-Psi}; 
\item the strictly increasing $\Gamma:(0,\infty) \rightarrow (1,\infty)$ and $\Gamma^{-1}:(1,\infty)\rightarrow(0,1)$ denote 
\begin{subequations}
\begin{eqnarray}
\Gamma &=& \Theta \circ (\Psi')^{-1}\\
\Gamma^{-1} &=& \Psi' \circ \Theta^{-1}\,, 
\end{eqnarray}
\end{subequations}
where, under Lemma \ref{lem-Theta-Psi}, 
$\Gamma(0)=1$, 
and $\Gamma^{-1}(1)=0$; 
\item  the user choose the function $r=r(a,b)\in(0,\infty]$ such that $r(a,b)=+\infty$ for $b \ge 0$, and for $b<0$,
\begin{equation}
r(a,b) > -b \quad \mbox{when} \ \ a\leq 0
\label{eq:r_positive_clean}
\end{equation}
and
\begin{equation}
r(a,b)\in \left(-b, - \frac{b}{\Gamma^{-1}(1-a/b)}\right)  \quad \mbox{when}\ \  a> 0\,.
\label{eq:r_condition_clean}
\end{equation}
\end{itemize}
Then the feedback
\begin{equation}
\omega^*(a,b)=(\Psi')^{-1}\left(-\frac{b}{r}\right)\,,
\label{eq:omega_star_clean}
\end{equation}
with the limiting value $\omega^*(a,b)=1$ when $r=+\infty$, minimizes
\begin{equation}
J(\omega)=\int_0^\infty \Big[q(a,b)+r(a,b)
\Psi(\omega)
\Big]\,dt
\label{eq:J_clean}
\end{equation}
along the solutions of \eqref{eq-xi-sys}, where
\begin{equation}
q(a,b)=-a+b
-b\, \Gamma\left(-\frac{b}{r}\right)\,,
\label{eq:q_clean}
\end{equation}
with limiting value $q(a,b)=-a$ when $r=+\infty$, has the property that 
\begin{equation}
q(a,b)>0 \quad \text{for all } (a,b)\in S\setminus\{(0,0)\}\,
\label{eq:q_posdef_clean}
\end{equation}
and $q(0,0)=0$. Furthermore, both the feedback \eqref{eq:omega_star_clean} and 
\begin{equation}
\omega_0(a,b)=
\Gamma\left(-\frac{b}{r}\right)\,,
\label{eq:omega_star_clean0}
\end{equation}
with the limiting value $\omega_0(a,b)=1$ when $r=+\infty$, are stabilizing at $a=b=0$ on $S$ with respect to the CLF $V$. 
\end{theorem}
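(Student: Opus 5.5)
The plan is to verify the stationary HJB equation for the Hamiltonian built from \eqref{eq-Vdot-r-shape}. Then, exactly as in the proof of Theorem \ref{thm3}, I would use convexity of $\Psi$ to pass from the HJB equation to optimality, with $V$ as the value function, and finally check the sign of $q$ and the two Lyapunov decrease statements. Define
\begin{equation*}
H(\xi,\omega)=q(a,b)+a+b(\omega-1)+r(a,b)\Psi(\omega).
\end{equation*}

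\textbf{Case $b\ge 0$.} Here $r=+\infty$, so only $\omega=1$ gives finite cost, and $\Psi(1)=0$ by Lemma \ref{lem-Theta-Psi}. The limiting values are $\omega^*=\omega_0=1$ and $q=-a$, so $H=0$ at $\omega=1$. Since $(a,b)\in S$ with $b\ge0$, we have $a<0$ away from $(0,0)$, hence $q>0$ and $\dot V=a<0$.

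\textbf{Case $b<0$, minimizer and HJB.} By A5 of Lemma \ref{lem-Theta-Psi} and $r>0$, the map $\omega\mapsto b\omega+r\Psi(\omega)$ is strictly convex, so its unique critical point $\Psi'(\omega^*)=-b/r$ is the global minimizer, which is \eqref{eq:omega_star_clean}. Applying the ODE \eqref{eq-Theta-to-Psi} in the form $\Psi/\Psi'=s-\Theta(s)$ at $s=\omega^*$, and using $r\Psi'(\omega^*)=-b$, gives
\begin{equation*}
b\omega^*+r\Psi(\omega^*)=b\Big(\omega^*-\frac{\Psi(\omega^*)}{\Psi'(\omega^*)}\Big)=b\,\Theta(\omega^*)=b\,\Gamma(-b/r).
\end{equation*}
Hence $\min_\omega H=q+a-b+b\Gamma(-b/r)=0$ by the definition \eqref{eq:q_clean} of $q$. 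This is the same computation as in the proof of Theorem \ref{sec-inv-opt}.

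\textbf{Optimality.} For arbitrary $\omega$, convexity gives $H(\xi,\omega)\ge H(\xi,\omega^*)=0$, that is, $\dot V\le -q-r\Psi(\omega)$, with equality only at $\omega=\omega^*$. Integrating, and using $V\ge0$ so that $\liminf V(\xi(t))\ge 0$, yields $J(\omega)\ge V(\xi_0)$, with equality for $\omega^*$ since it is stabilizing.

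\textbf{Sign of $q$ for $b<0$.} Since $-b/r>0$, we get $(\Psi')^{-1}(-b/r)>1$ and therefore $\Gamma(-b/r)>1$. So $b(1-\Gamma)>0$, and $q=-a+b(1-\Gamma(-b/r))$.
\begin{itemize}
\item If $a\le 0$, then $q>0$ immediately.
\item If $a>0$, then $1-a/b>1$, and $q>0$ is equivalent to $\Gamma(-b/r)>1-a/b$. Because $\Gamma$ is increasing, this is $-b/r>\Gamma^{-1}(1-a/b)$, i.e. $r<-b/\Gamma^{-1}(1-a/b)$. This is exactly the upper bound in \eqref{eq:r_condition_clean}.
\end{itemize}

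\textbf{Stability.} Under $\omega^*$, $\dot V=-q-r\Psi(\omega^*)<0$ off the origin. Under $\omega_0$, $\dot V=a+b(\Gamma(-b/r)-1)=-q<0$.

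\textbf{Expected main obstacle: well-definedness.} The HJB algebra is routine; the difficulty is that $(\Psi')^{-1}$ and $\Gamma^{-1}$ must be evaluated at admissible arguments. I expect this to be precisely the role of the lower bound $r>-b$: it forces $-b/r\in(0,1)$. I would first try to show, from the properties of $\Theta$ in Lemma \ref{lem-Theta-Psi}, that $\Psi'$ on $(1,\infty)$ covers $(0,1)$ (as for $\Psi=\Omega$, where $\Psi'(s)=1-1/s$). The intended justification is the statement's assertion that $\Gamma^{-1}$ maps $(1,\infty)$ into $(0,1)$, so $\Gamma$ is defined on $(0,1)$. With that in hand, $\omega^*>1$ is positive. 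It also guarantees that the interval in \eqref{eq:r_condition_clean} is nonempty, because $\Gamma^{-1}(1-a/b)\in(0,1)$.

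\textbf{Continuity near $b=0$.} Finally, I would check that the limiting conventions match as $b\to0^-$ with $r\to\infty$. There $-b/r\to0$, so $\omega^*,\omega_0\to1$ and $q\to-a$, by $\Gamma(0)=1$ and $\Gamma^{-1}(1)=0$.
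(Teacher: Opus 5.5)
Your route is the paper's own: stationarity $b+r\Psi'(\omega^*)=0$, the identity $b\omega^*+r\Psi(\omega^*)=b\,\Theta(\omega^*)=b\,\Gamma(-b/r)$ from $\Psi/\Psi'=s-\Theta(s)$, the tangent-line convexity comparison, the case analysis for $q>0$, and decrease of $V$ under $\omega^*$ and $\omega_0$. Your split into $a\le 0$ and $a>0$ is cleaner than the paper's, which invokes \eqref{eq:r_condition_clean} on the whole branch $b<0$.

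\textbf{The step that fails is optimality.} Since $H(\xi,\omega)=q+\dot V+r\Psi(\omega)$, the inequality $H(\xi,\omega)\ge 0$ says $\dot V\ge -q-r\Psi(\omega)$, not $\dot V\le -q-r\Psi(\omega)$. Integrated as you wrote it, your inequality gives $\int_0^T(q+r\Psi(\omega))\,dt\le V(\xi_0)-V(\xi(T))\le V(\xi_0)$. That is an \emph{upper} bound on $J(\omega)$, the opposite of what you need.

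With the correct sign you get $J(\omega)\ge V(\xi_0)-\liminf_{t\to\infty}V(\xi(t))$. Here $V\ge 0$ does not close the argument, since $\liminf V\ge 0$ only weakens the bound. What is needed is $V(\xi(t))\to 0$ along the competing trajectory. So the minimization must be over controls that drive $\xi(t)\to\mathbf 1$, the admissible class made explicit in Theorem \ref{thm3}. Then $J(\omega)\ge V(\xi_0)$, with equality for $\omega^*$. The paper's proof has the inequality in the right direction but ends with the same appeal to $V\ge 0$, so this restriction is missing there too.

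\textbf{Well-definedness.} Your first plan cannot succeed: you cannot derive from Lemma \ref{lem-Theta-Psi} that $\Psi'$ maps $(1,\infty)$ onto a set containing $(0,1)$. The reason is that $\Psi$ is fixed only up to the positive factor set by $s_0^+$.
\begin{itemize}
\item For $\Theta_1$ one gets $\Psi=\Omega/\Omega(s_0^+)$, whose derivative on $(1,\infty)$ ranges over $(0,1/\Omega(s_0^+))$. If $\Omega(s_0^+)>1$, some admissible $r$ puts $-b/r$ outside the domain of $(\Psi')^{-1}$.
\item Conversely, when $\Psi'$ is unbounded on $(1,\infty)$ (e.g.\ $\Theta_3$, whose $\Psi_3$ grows superlinearly), $\Gamma^{-1}(1-a/b)=\Psi'(\Theta^{-1}(1+a/|b|))$ exceeds $1$ once $a/|b|$ is large. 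The interval in \eqref{eq:r_condition_clean} is then empty.
\end{itemize}
So the theorem implicitly needs the normalization $\Psi'((1,\infty))=(0,1)$, that is, $\Gamma^{-1}$ mapping $(1,\infty)$ \emph{onto} $(0,1)$. Falling back on the statement's hypothesis is the right move, but you must read it in this \emph{onto} sense. Mapping merely \emph{into} $(0,1)$ does not make $\Gamma$ defined on all of $(0,1)$. The paper's proof does not address this point at all.
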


\begin{proof}
For $(a,b)\in S$, consider the function
\begin{equation}
h(\omega)=b(\omega-1)+r
\Psi(\omega),
\label{eq:h_def}
\end{equation}
where $r=r(a,b)$ and $\omega>0$. Since $\Psi(\omega)
$ is strictly convex on $(0,\infty)$, so is $h(\omega)$ whenever $r<\infty$. Its derivative is
$h'(\omega)=b+r\Psi'(\omega)$, 
and setting it 
equal to zero yields
\begin{equation}
b+r\Psi'(\omega)
=0,
\label{eq:stationarity}
\end{equation}
whose unique solution is \eqref{eq:omega_star_clean}. 
In the case $b\ge 0$, the theorem sets $r=+\infty$, and the penalty term enforces $\omega=1=\omega^*$ as the unique minimizer. Hence $\omega^*$ uniquely minimizes $h$ in all cases.
Substituting \eqref{eq:omega_star_clean} gives
\begin{equation}
b(\omega^*-1)= b \left[(\Psi')^{-1}\left(-\frac{b}{r}\right)-1\right],
\label{eq:identity1}
\end{equation}
\begin{equation}
r\Psi(\omega^*)=-b\,(\omega^*-\Theta(\omega^*)),
\label{eq:identity2}
\end{equation}
and therefore
\begin{equation}
a+b(\omega^*-1)+r\Psi(\omega^*)
=-q(a,b).
\label{eq:hjb_identity}
\end{equation}
Let $\omega(t)>0$ be arbitrary and consider \eqref{eq-Vdot-r-shape}. 
By convexity of $\Psi$,
\begin{equation}
\Psi(\omega)\ge \Psi(\omega^*)+\Psi'(\omega^*)(\omega-\omega^*).
\label{eq:convexity}
\end{equation}
Multiplying \eqref{eq:convexity} by $r$ and using the stationarity condition \eqref{eq:stationarity} yields
\begin{equation}
b(\omega-\omega^*)\ge r\Psi(\omega^*)-r\Psi(\omega).
\label{eq:key_inequality}
\end{equation}
Combining \eqref{eq-Vdot-r-shape}, \eqref{eq:hjb_identity}, and \eqref{eq:key_inequality} gives
\begin{equation}
\dot V \ge -q(a,b)-r\Psi(\omega).
\label{eq:Vdot_bound}
\end{equation}
Integrating \eqref{eq:Vdot_bound} over $[0,T]$,
\begin{equation}
\int_0^T\!\big(q(a,b)+r\Psi(\omega)\big)\,dt
\ge V(0)-V(T).
\label{eq:integrated}
\end{equation}
Since $V\ge 0$, letting $T\to\infty$ yields $J(\omega)\ge V(0)$. For $\omega\equiv\omega^*$, all inequalities hold with equality, so $J(\omega^*)=V(0)$, and therefore $\omega^*$ minimizes $J$. 
The fact that $q(a,b)$ is positive definite is proven next.
On the branch $b<0$, condition \eqref{eq:r_condition_clean} implies
$-\frac{b}{r} > \Gamma^{-1}(1-a/b)$.
Since $\Gamma$ is strictly increasing on (0,1), it follows that
$\Gamma\!\left(-\frac{b}{r}\right) > 1-a/b$.
Multiplying both sides by $b<0$ reverses the inequality, yielding
$-b\,\Gamma\!\left(-\frac{b}{r}\right) < a-b$.
Therefore,
$q(a,b) = -a + b - b\,\Gamma\!\left(-\frac{b}{r}\right) > 0$.
On the branch $b\ge 0$, the theorem assigns $r=+\infty$, hence $\omega^*=1$ and $q(a,b)=-a$, and since $(a,b)\in S$ implies $b\ge 0\Rightarrow a<0$, it follows that $q(a,b)=-a>0$. 
Finally, 
$q(0,0)=0$ at equilibrium by definition. 
Now we turn to proving stabilization. 
Fix $(a,b)\in S\setminus\{(0,0)\}$. If $b\ge 0$, then one has $r=+\infty$ and
$\dot V=a+b(\omega-1)=a<0$,
since $b\ge 0\Rightarrow a<0$ on $S$. If $b<0$, then $r>-b$ and hence $z:=b/r\in(-1,0)$, and we have 
$\omega=\omega_0=\Gamma(-z)=\Theta(\omega^*)$;
 thus
\begin{equation}
\dot V=a+b(\omega_0-1)
=a+ b\Gamma(-z)-b
=-q(a,b)<0.
\end{equation}
For $\omega^*$, 
with \eqref{eq:hjb_identity}, 
$\dot V=a+b(\omega^*-1)= - q(a,b)-r\Psi(\omega^*)$. 
Using \eqref{eq:hjb_identity},
and the facts that $q(a,b)>0$ and $\Psi(\omega^*)
>0$ 
for $\omega^*\neq 1$, we conclude $\dot V<0$ under $\omega^*$ as well.
\end{proof}


Inequalities \eqref{eq:r_positive_clean}, \eqref{eq:r_condition_clean} are not  assumptions. A function $r(a,b)$ that satisfies these inequalities can always be selected. So, controllers \eqref{eq:omega_star_clean} and \eqref{eq:omega_star_clean0} are {\em families} of universal stabilizers, and \eqref{eq:omega_star_clean}, additionally, inverse optimal.

\vspace{-3mm}\paragraph*{Enforcing continuity of feedback.} One can choose $r(a,b)$ for $b<0, a>0$ to make the optimizer $\omega^*(a,b)$ continuous across the boundary $a=0$ in infinitely many ways, including  $r=-{b}/{\sqrt{\Gamma^{-1}(1-a/b)}}$ or $r=-\frac{b\bigl(1+\Gamma^{-1}(1-a/b)\bigr)}{2\,\Gamma^{-1}(1-a/b)}$. We select the latter, namely, $r$ that is midpoint within the interval \eqref{eq:r_condition_clean}. So, for all $(a,b)\in S$, we choose
\begin{equation}
r(a,b)=
\begin{cases}
-b\dfrac{1+\Gamma^{-1}(1-a/b)}{2\Gamma^{-1}(1-a/b)},
& b<0 \mbox{ and } a>0,\\[10pt]
+\infty, & \mbox{else}
\end{cases}
\end{equation}
where $\Gamma^{-1}:(1,\infty)\to(0,1)$ is strictly increasing and satisfies $\Gamma^{-1}(1)=0$. With this choice,
\begin{eqnarray}
\omega^*(a,b)
&=&
(\Psi')^{-1}\!\left(-\frac{b}{r(a,b)}\right)
\nonumber \\
&=&
\begin{cases}
(\Psi')^{-1}\!\Bigg(
\dfrac{2\Gamma^{-1}(1-a/b)}{1+\Gamma^{-1}(1-a/b)}
\Bigg)
, & b<0 \mbox{ and }a>0,\\[6pt]
1, & \mbox{else.}
\end{cases}
\nonumber\\ &&
\end{eqnarray}
Since $\Gamma^{-1}(1-a/b)\to 0$ as $a\downarrow 0$, one has $r(a,b)\to+\infty$ and hence $\omega^*(a,b)\to (\Psi')^{-1}(0)=1$, ensuring continuity across the boundary $a=0$. Moreover, along the line $b=0$ with $a<0$, the condition $a\le 0$ is automatically satisfied, so $r=+\infty$ and $\omega^*=1$. 
This continuity-ensuring choice for $\omega^*$ makes $\omega_0$ also continuous, since $\omega_0=\Theta(\omega^*)$. 

The continuous design is 
a corollary of Thm. \ref{thm-5}.

\begin{corollary}[{\em Continuous explicit} inv. opt. design]
\label{thm-5cc}
With the notation in Theorem \ref{thm-5}, the continuous and positive feedback laws
\begin{subequations}
\label{eq-construct-contin}
\begin{eqnarray}
\label{eq-construct-contin*}
\omega^* &=&
\begin{cases}
(\Psi')^{-1}\!
\Bigg(
\dfrac{2\zeta}{1+\zeta}\Bigg)
, & 
L_gV <0 ,
\ L_{f+g}V>0\\[6pt]
1, & \mbox{else}
\end{cases}
\nonumber\\
&&
\\[5pt]
\label{eq-construct-contin0}
\omega_0 &=&\Theta(\omega^*)
\end{eqnarray}
\end{subequations}
where
\begin{equation}
    \zeta(\xi) = \Gamma^{-1}\left( -\frac{L_fV(\xi)}{L_gV(\xi)}\right) 
\end{equation}
are both globally stabilizing at 
$\xi={\bf 1}$ on $(0,\infty)^n$
and \eqref{eq-construct-contin*} minimizes \eqref{eq:J_clean} with 
\begin{equation}
q = - L_fV - L_gV 
\begin{cases}
\Gamma \Bigg(
\dfrac{2\zeta}{1+\zeta}\Bigg), 
& L_gV <0 , 
\ L_{f+g}V>0 \\[10pt]
1, & \mbox{else}
\end{cases}
\end{equation}
which is positive for $\xi\in(0,\infty)\setminus {\bf 1}$ and the positive 
\begin{equation}
r 
=
\begin{cases}
-L_gV \Bigg(
\dfrac{1+\zeta}{2\zeta}\Bigg)
, & L_gV 
<0 , 
\ L_{f+g}V>0\\[10pt]
+\infty, & \mbox{else}.
\end{cases}
\end{equation}
\end{corollary}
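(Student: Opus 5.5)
The plan is to obtain Corollary \ref{thm-5cc} as a direct instantiation of Theorem \ref{thm-5}, using the midpoint choice of $r(a,b)$ displayed just before the corollary, translated back to state variables via $a=L_{f+g}V=L_fV+L_gV$ and $b=L_gV$.

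\textbf{Identification of $\zeta$.} First I would note that for $b<0$,
\[
1-\frac{a}{b}=1-\frac{L_fV+L_gV}{L_gV}=-\frac{L_fV}{L_gV},
\]
so $\Gamma^{-1}(1-a/b)=\zeta(\xi)$.

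\textbf{Admissibility of $r$.} Next I would check that $r$ is admissible in Theorem \ref{thm-5}.
\begin{itemize}
\item On the branch $b<0$, $a>0$ we have $1-a/b>1$, hence $\zeta\in(0,1)$. The midpoint $r=-b(1+\zeta)/(2\zeta)$ then lies strictly between $-b$ and $-b/\zeta$, which is \eqref{eq:r_condition_clean}.
\item On all other branches $r=+\infty$. For $b\ge0$ this is required by the theorem.
\item For $b<0$, $a\le0$, the choice $r=+\infty$ is not literally covered by \eqref{eq:r_positive_clean}, which asks for a finite $r>-b$. This is the one point needing care. I would argue it as the limiting case: $\omega^*=1$ and $q=-a+b-b\,\Gamma(0)=-a$. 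Since $\Gamma(0)=1$, this $q$ is positive whenever $a<0$. When $a=0$ and $b<0$, however, $q=0$ with $b\ne0$, so positive definiteness would fail.
\end{itemize}
I expect this boundary case $a=0$, $b<0$ to be the main obstacle. I would first try to handle it by continuity: $r\to\infty$ as $a\downarrow0$. Failing that, I would check whether $\dot V=a=0$ there is excluded by the CLF condition, or accept it as a degenerate point where $V$ still decreases through LaSalle-type reasoning. I would flag honestly that strict positivity of $q$ off $\mathbf 1$ may need the closure of the $a>0$ branch, meaning replacing $a>0$ by $a\ge0$ with $\zeta=0$ giving $r=\infty$.

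\textbf{Formulas for $\omega^*$, $q$, $r$.} Substituting $-b/r=2\zeta/(1+\zeta)$ into \eqref{eq:omega_star_clean}, \eqref{eq:q_clean} and \eqref{eq:omega_star_clean0} gives the stated expressions for $\omega^*$ and $q$, and $\omega_0=\Gamma(-b/r)=\Theta((\Psi')^{-1}(-b/r))=\Theta(\omega^*)$.

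\textbf{Positivity.} Positivity of $\omega^*$ follows because $(\Psi')^{-1}$ maps into $(0,\infty)$. Positivity of $\omega_0$ follows because $\Theta$ maps $(0,\infty)$ into $(0,\infty)$.

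\textbf{Continuity.} Continuity uses the argument in the paragraph preceding the corollary.
\begin{itemize}
\item As $a\downarrow0$ with $b<0$, $\zeta\to\Gamma^{-1}(1)=0$, so $\omega^*\to(\Psi')^{-1}(0)=1$.
\item On $b=0$ the CLF condition \eqref{eq-CLF-cond} gives $a<0$. A neighborhood of such a point therefore lies in the "else" branch, so $\omega^*\equiv1$ there.
\item Continuity of $V$'s Lie derivatives, and of $\Gamma^{-1}$ and $(\Psi')^{-1}$, handles the interior of the active branch.
\end{itemize}
Continuity of $\omega_0$ then follows by composition with $\Theta$.

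\textbf{Conclusion.} Finally, global asymptotic stability on $(0,\infty)^n$ follows from the $\dot V<0$ conclusions of Theorem \ref{thm-5}, together with properness and positive definiteness of $V$ from Theorem \ref{thm-univ0}'s standing assumptions. The optimality with value $V(\xi_0)$ is inherited verbatim from Theorem \ref{thm-5}.
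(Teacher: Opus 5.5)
\textbf{The gap is the boundary case you flagged, and it cannot be closed, because the statement itself is false there.}

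Your route is the paper's own. The paper offers nothing beyond instantiating Theorem~\ref{thm-5} with the midpoint $r$, plus the continuity paragraph, and your identification $\zeta=\Gamma^{-1}(1-a/b)$ is correct.

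On $\{L_{f+g}V=0,\ L_gV<0\}$ the ``else'' branch gives $\omega^*=\omega_0=1$, $q=-L_{f+g}V=0$ and $\dot V=L_{f+g}V=0$. None of your four escapes works.
\begin{enumerate}
\item Continuity only describes nearby values. Using $\Gamma(\zeta)=-L_fV/L_gV$, on the active branch
\[
q=|L_gV|\Big[\Gamma\Big(\tfrac{2\zeta}{1+\zeta}\Big)-\Gamma(\zeta)\Big]\longrightarrow 0\qquad (a\downarrow 0).
\]
Any $r$ with $r\to\infty$ as $a\downarrow 0$ (both choices mentioned before the corollary) forces $\omega^*=1$ on that set.
\item The CLF condition \eqref{eq-CLF-cond} only demands $b<0$ when $a\ge 0$, and these points satisfy that.
\item Closing the branch to $a\ge 0$ gives $\zeta=\Gamma^{-1}(1)=0$, $r=+\infty$ and $\omega^*=(\Psi')^{-1}(0)=1$, so nothing changes.
\item LaSalle fails because such points can be closed-loop equilibria. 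Take $n=1$, $V=\Omega(\xi)$, $g(\xi)=-\xi$, $f(\xi)=\xi+(\xi-1)^2(2-\xi)$. Then $L_gV=1-\xi$ and $L_{f+g}V=(\xi-1)^3(2-\xi)/\xi$. So \eqref{eq-CLF-cond} holds, since $L_{f+g}V\ge 0$ only on $(1,2]$, where $L_gV<0$, and the active set is $(1,2)$. For $\xi\ge 2$ both laws equal $1$ and $\dot\xi=(\xi-1)^2(2-\xi)\le 0$. Hence $\xi=2$ is an equilibrium with $q(2)=0$, and it attracts every $\xi_0>2$.
\end{enumerate}
The proof of Theorem~\ref{thm-5} also never treats the case $b<0,\ a\le 0$ with $r=+\infty$, so there is nothing to inherit there.

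At $a=0$ one has $q=|b|\,(\Gamma(-b/r)-1)$. So a repair must use a finite $r$ on $\{a=0,\ b<0\}$, which means keeping $r$ bounded as $a\downarrow 0$. Continuity at $b=0$ must then come instead from $-b/r\to 0$ as $b\uparrow 0$. The universal law of Theorem~\ref{thm:universal_positive} does exactly this, with $\omega_\kappa=2$ at $a=0$. With the midpoint $r$ as stated, you can claim only $q\ge 0$. You get global asymptotic stability only under an extra hypothesis: no solution of $\dot\xi=f+g$ other than $\xi\equiv\mathbf 1$ remains in $\{L_{f+g}V=0\}$.

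\textbf{Two smaller omissions.}

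First, your continuity cases skip $\xi=\mathbf 1$, where $a=b=0$. There $\omega^*\to 1$ requires $\zeta\to 0$, i.e.\ $L_{f+g}V/L_gV\to 0$ along the active set. This is a small-control-type property that the hypotheses do not supply. For example, with $V=\Omega$, $f=3\xi(\xi-1)$ and $g=-\xi(\xi-1)$, every $\xi\neq 1$ is active with $-L_fV/L_gV\equiv 3$. So $\omega^*$ is a constant different from $1$ off $\mathbf 1$.

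Second, $\zeta\in(0,1)$, together with $2\zeta/(1+\zeta)$ lying in the range of $\Psi'$, rests on the paper's claim $\Gamma^{-1}:(1,\infty)\to(0,1)$. That claim is really the normalization $\lim_{s\to\infty}\Psi'(s)=1$. It holds for $\Psi_1$ and $\Psi_2$ but not for the superlinear $\Psi_3$. Wherever $\zeta>1$, the displayed $q$ is negative.
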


Next we specialize Theorem \ref{thm-5} to the canonical, Volterra $\Psi=\Omega$, which fixes the contractor design choice to $\Theta$ in \eqref{eq-def-Theta-Volterra}, and then designs $r(\xi)$. 

\begin{corollary}[Inv. opt. design w/ {\em Volterra} cost]
\label{thm-5c}
With the function $r=r(a,b)\in(0,\infty]$  chosen such that $r(a,b)=+\infty$ for $b \ge 0$, and for $b<0$,
\begin{equation}
r(a,b) > -b,
\label{cor-eq:r_positive_clean}
\end{equation}
and
\begin{equation}
r(a,b)\,\ln\!\left(\frac{r(a,b)}{r(a,b)+b}\right) > a - b\,,
\label{cor-eq:r_condition_clean}
\end{equation}
the feedback
\begin{equation}
\omega^*(a,b)=\frac{r(a,b)}{r(a,b)+b},
\label{cor-eq:omega_star_clean}
\end{equation}
with the limiting value $\omega^*(a,b)=1$ when $r=+\infty$, minimizes
\begin{equation}
J(\omega)=\int_0^\infty \Big(q(a,b)+r(a,b)\big(\omega-1-\ln\omega\big)\Big)\,dt
\label{cor-eq:J_clean}
\end{equation}
along the solutions of \eqref{eq-xi-sys}, where
\begin{equation}
q(a,b)=-a+b+r(a,b)\ln\!\left(\frac{r(a,b)}{r(a,b)+b}\right),
\label{cor-eq:q_clean}
\end{equation}
with limiting value $q(a,b)=-a$ when $r=+\infty$, has the property that 
\begin{equation}
q(a,b)>0 \quad \text{for all } (a,b)\in S\setminus\{(0,0)\}\,
\label{cor-eq:q_posdef_clean}
\end{equation}
and $q(0,0)=0$. Furthermore, both the feedback \eqref{eq:omega_star_clean} and 
\begin{equation}
\omega_0(a,b)=\frac{\ln\left(1+\dfrac{b}{r(a,b)}\right)}{\dfrac{b}{r(a,b)}},
\label{cor-eq:omega_star_clean0}
\end{equation}
with the limiting value $\omega_0(a,b)=1$ when $r=+\infty$, are stabilizing at $a=b=0$ on $S$ with respect to the CLF $V$. 
\end{corollary}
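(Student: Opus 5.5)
The plan is to derive Corollary \ref{thm-5c} as a direct specialization of Theorem \ref{thm-5} with $\Psi=\Omega$. First I will confirm that $\Theta(s)=\frac{s\ln s}{s-1}$ in \eqref{eq-def-Theta-Volterra} meets the conditions of Lemma \ref{lem-Theta-Psi}, and that \eqref{eq-Psidef} returns $\Omega$ up to a positive scale factor. The quickest route is to go backwards. For $\Psi=\Omega$ one has $\Psi'(s)=1-1/s$. Then the ODE \eqref{eq-Theta-to-Psi}, $\Psi/\Psi'=s-\Theta(s)$, gives
\[
\Theta(s)=s-\frac{s(s-1-\ln s)}{s-1}=\frac{s\ln s}{s-1}.
\]
Since \eqref{eq-Psidef} solves this linear ODE separately on each side of $1$, it fixes $\Psi$ only up to a positive constant on each side. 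The constants $s_0^\pm$ can be chosen so that $\Psi=\Omega$ exactly.

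I will check the hypotheses of Lemma \ref{lem-Theta-Psi} in turn:
\begin{enumerate}
\item Positivity and the limits: $\Theta(s)\sim s|\ln s|\to0$ as $s\to0^+$, and $\Theta(s)\sim\ln s\to\infty$ as $s\to\infty$.
\item Value and slope at $1$: $\Theta(1)=1$ and $\Theta'(1)=1/2<1$, both obtained by Taylor expansion.
\item Strict monotonicity: $\Theta'(s)=\frac{s-1-\ln s}{(s-1)^2}>0$, which holds because $\Omega>0$ away from $1$.
\item The sign condition: $(\Theta(s)-s)(s-1)=s\bigl(\ln s-(s-1)\bigr)<0$ for $s\neq1$.
\end{enumerate}

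Next I will compute the explicit objects of Theorem \ref{thm-5}.
\begin{enumerate}
\item The inverse derivative is $(\Psi')^{-1}(z)=1/(1-z)$ for $z<1$. Hence $\omega^*=(\Psi')^{-1}(-b/r)=\frac{r}{r+b}$, which is \eqref{cor-eq:omega_star_clean}. It is positive because $r>-b$.
\item The map $\Gamma(z)=\Theta\bigl(1/(1-z)\bigr)$ simplifies to $\Gamma(z)=-\frac{\ln(1-z)}{z}$. Substituting $z=-b/r$ gives $\omega_0=\Gamma(-b/r)=\frac{\ln(1+b/r)}{b/r}$, matching \eqref{cor-eq:omega_star_clean0}.
\item The state cost \eqref{eq:q_clean} becomes $q=-a+b-b\Gamma(-b/r)=-a+b-r\ln(1+b/r)$. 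This equals $-a+b+r\ln\frac{r}{r+b}$, matching \eqref{cor-eq:q_clean}.
\end{enumerate}

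The main point to handle carefully is translating the admissibility interval \eqref{eq:r_condition_clean} into \eqref{cor-eq:r_condition_clean}. Theorem \ref{thm-5} uses the inequality only through its consequence $\Gamma(-b/r)>1-a/b$, as seen in the positivity proof of $q$. With the explicit $\Gamma$ and $b<0$, this reads
\[
\frac{r}{-b}\ln\frac{r}{r+b}>1-\frac{a}{b}.
\]
Multiplying by $-b>0$ gives $r\ln\frac{r}{r+b}>a-b$, which is \eqref{cor-eq:r_condition_clean}. Because $\Gamma$ is strictly increasing, this is equivalent to the upper bound $r<-b/\Gamma^{-1}(1-a/b)$ when $a>0$.

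When $a\le0$, the condition \eqref{cor-eq:r_condition_clean} follows automatically from $r>-b$. Indeed, $r\ln\frac{r}{r+b}=-b\,\Gamma(-b/r)>-b\geq a-b$, using $\Gamma>1$ on $(0,1)$. So in both cases the corollary's hypotheses coincide with \eqref{eq:r_positive_clean} and \eqref{eq:r_condition_clean}.

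With the hypotheses matched, Theorem \ref{thm-5} delivers everything claimed: the minimization of \eqref{cor-eq:J_clean}, the positive definiteness \eqref{cor-eq:q_posdef_clean}, and stabilization under both $\omega^*$ and $\omega_0$. The limiting values for $r=+\infty$ carry over unchanged.
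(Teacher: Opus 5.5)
Your proposal is correct and follows the paper's route. The paper obtains this corollary by specializing Theorem~\ref{thm-5} to $\Psi=\Omega$ with $\Theta(s)=s\ln s/(s-1)$, and your computations of $(\Psi')^{-1}(z)=1/(1-z)$, $\Gamma(z)=-\ln(1-z)/z$, $\omega^*$, $\omega_0$, $q$, and the hypothesis matching are what that specialization requires. Your check that \eqref{cor-eq:r_condition_clean} is equivalent to \eqref{eq:r_condition_clean} when $a>0$, and is implied by $r>-b$ when $a\le 0$, usefully fills in a step the paper leaves implicit.
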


Corollary \ref{thm-5c} is one of many ways of applying Theorem \ref{thm-5}. We show another basic example, $\Theta(s)=\sqrt{s}$ studied in Section \ref{secVIII}. The design recipe yields a special case of \eqref{eq-construct-contin*}:
\begin{equation}
\label{eq-construct-contin*sqrt}
\omega^* =1+
\begin{cases}
4 \dfrac{L_fV \ L_{f+g}V}{(L_gV)^2}, 
& L_gV <0 , \\[-6pt] & L_{f+g}V>0,\\[10pt]
0, & \mbox{else.}
\end{cases}
\end{equation}

\begin{example}
\rm We return to Example \ref{ex-LinSon1} and, using \eqref{eq-construct-contin*sqrt}, with $u=\omega-1$, provide another simple alternative to \cite[(49)]{LinSontag1995}:
\begin{equation}
u^*(x)
=\begin{cases}
4x^2(1+x^2), & x>0\\
0, & x\le0
\end{cases}
\end{equation}
which guarantees stability with 
\begin{equation}
\dot V(x)
=
-|x|x^2
\begin{cases}
3+4x^2, & x>0\\
1, & x\le0
\end{cases}
\end{equation}
and optimality with $q(x)=|x|x^2$, 
\begin{equation}
r(x)=
\begin{cases}
\dfrac{1+3x^2+2x^4}{2x}, & x>0\\
+\infty, & x\le0
\end{cases}
\end{equation}
and control cost $\frac{u^2}{\left(1+\sqrt{1+u}\right)^2}$, with linear growth for $u\gg 1$.
\hfill $\triangle$
\end{example}

\section{Conclusions}

After several sections, centered around Theorem \ref{sec-inv-opt}, through which, using the predator-prey problem, we eased the reader into the structural difficulties of inverse optimal stabilization under positivity constraints on state and input, in the paper's final four sections we presented the paper's several general frameworks and constructive results. 

Theorems \ref{thm3} and \ref{thm-5} both introduce general frameworks, and produce not merely single controllers but controller families, but differ in design options. {\em If} a user succeeds in constructing some stabilizing controller and/or an associated strict CLF, then the theorems provide the following:
\begin{enumerate}
\item Theorem \ref{thm3}: Suppose a nominal controller $\omega_0$ has been designed in such a way that some expander $\Sigma$ applied to $\omega_0$ acts in the $L_gV$ direction. Then this expanded version of the nominal controller, $\omega^*=\Sigma(\omega_0)$, for every such expander $\Sigma$, is stabilizing and inverse optimal.
\item Theorem \ref{thm-5} and its Corollary \ref{thm-5cc}: Suppose the user can construct a particularly strong CLF with the property that $L_gV
\geq 0 \Rightarrow L_fV+L_gV <0$, namely, the property of {\em stability-ensuring drift-dominance when positive control is detrimental to stability and kept at the equilibrium-holding value}. Then, the user can set aside the nominal stabilizing controller used to verify this CLF condition, and employ only the  CLF itself to design, ab initio, for a contractor function $\Theta$, a whole family of inverse optimal stabilizers $\omega^*$ directly. And obtain, as a bonus, the non-optimal stabilizing ``contracted counterparts'' $\omega_0=\Theta(\omega^*)$. 
\end{enumerate}

Universal controllers in Theorems \ref{thm-univ0} and \ref{thm:universal_positive} are not within the 
general frameworks of Theorems \ref{thm3} and \ref{thm-5}. They are particularly explicit {\em single} realizations (not controller families), achieving stabilization under an unrestrictive CLF, and inverse optimality under a more restrictive CLF.

None of the general results in the paper (the single universal formulae and the general inverse optimal frameworks) are developed in the paper for {\em concrete classes of systems}, such as strict-feedback and feedforward classes in the unconstrained Euclidean setting. Tempting generalizations, for future research, are for populations  with multiple  species. 
It likely won't be easy, however, to find generalizations within 
positive systems that are open-loop unstable.



\appendix

\section{Proof of Theorem \ref{thm-intensified}}
\label{pf-Thm1}

From \eqref{Vdot-with-L-G},
and, under the baseline stabilizing feedback $U_0(X,Y)=Y^2/X$, 
\begin{eqnarray}
&&L(X,Y)+G(X,Y)U_0(X,Y)
\nonumber \\ 
&& =-q(X,Y)
= -\frac{(X-1)^2}{X}-\frac{(Y-X)^2}{X}\frac{Y}{X}
\end{eqnarray}
is negative definite on $(0,\infty)^2$ with unique zero at $(1,1)$. Writing
\begin{equation}
\dot V = L + G U = \big(L + G U_0\big) + G\,(U-U_0) = -q + G\,(U-U_0),
\end{equation}
it suffices to determine the sign of $G(U-U_0)$. Since $U=Y\,\Sigma(Y/X)$ and $U_0=Y (Y/X)$, the 
property \eqref{eq-expansion} implies $\operatorname{sign}(U-U_0)=\operatorname{sign}(\Sigma(s)-s)$, where $s=Y/X>0$ and $s\neq 1$. Based on \eqref{eq-expansion}, the definition \eqref{eq-G-def}, and $X>0$, 
$\operatorname{sign}(U-U_0)
=-\operatorname{sign}(G)$. Hence $G(U-U_0)\le 0$ for all $(X,Y)$, with equality only when 
$Y=X$. Therefore $\dot V\le -q
<0$ for all $(X,Y)\neq(1,1)$ and $\dot V(1,1)=0$, which establishes global asymptotic stability of $X=Y=1$.

\section{Proof of Lemma \ref{lem-Theta-Psi}}
\label{app-proof-Lemma1}


\emph{Pf. of A1.} For $s\neq 1$, $\Psi(s)=\exp(\cdot)>0$, and $\Psi(1)=0$ by definition.

\emph{Pf. of A2.} Differentiate the defining integrals 
\eqref{eq-Psidef1}, \eqref{eq-Psidef2}; the fundamental theorem of calculus gives $d(\ln\Psi)/ds=1/(s-\Theta(s))$.

\emph{Pf. of A3.} Let $f(s)=s-\Theta(s)$. Then $f(1)=0$,  $f'(1)=1-\Theta'(1)>0$, and  $f(s)\sim (1-\Theta'(1))(s-1)$ near $1$, so
\begin{equation}
\int_{s_0}^{s}\frac{1}{f(\tau)}\,d\tau
\sim
\frac{1}{1-\Theta'(1)}\ln|s-1|
\to -\infty
\quad\text{as } s\to 1,
\end{equation}
which implies $\Psi(s)\to 0$ and gives a strict minimum at $1$.

\emph{Pf. of A4.} From A2,
$\operatorname{sign}(\Psi'(s))
=
\operatorname{sign}\!\left(\frac{1}{s-\Theta(s)}\right)$. 
By assumption (3), $s-\Theta(s)<0$ for $s<1$ and $s-\Theta(s)>0$ for $s>1$, giving the claimed monotonicity.

\emph{Pf. of A5.} Rearrange \eqref{eq-Theta-to-Psi} as
\label{eq-def-Theta}
$\Theta(s)
=s-\frac{\Psi(s)}{\Psi'(s)}$
and differentiate 
for $s\neq 1$ to obtain
$\Theta'(s)=\frac{\Psi(s)\,\Psi''(s)}{(\Psi'(s))^2}$
Since $\Theta'(s)>0$, $\Psi(s)>0$ for $s\neq 1$, and $(\Psi'(s))^2>0$, it follows that $\Psi''(s)>0$ for all $s\neq 1$.

\vspace{-3mm}\paragraph*{Acknowledgment.}
The support by AFOSR grant FA9550-23-1-0535  and NSF grants ECCS-2151525 and CMMI-2228791 is gratefully acknowledged.

\bibliography{sample.bib}

@article{DeLeenheerAeyels2001,
  author  = {Patrick De Leenheer and Dirk Aeyels},
  title   = {Stabilization of positive linear systems},
  journal = {Systems \& Control Letters},
  volume  = {44},
  number  = {4},
  pages   = {259--271},
  year    = {2001},
  issn    = {0167-6911},
  doi     = {10.1016/S0167-6911(01)00146-3}
}

@article{AngeliSontag2003,
  author  = {Angeli, David and Sontag, Eduardo D.},
  title   = {Monotone Control Systems},
  journal = {IEEE Transactions on Automatic Control},
  year    = {2003},
  volume  = {48},
  number  = {10},
  pages   = {1684--1698},
  month   = oct,
  doi     = {10.1109/TAC.2003.818723}
}

@article{BlanchiniColaneriValcher2020,
  author  = {Blanchini, Franco and Colaneri, Paolo and Valcher, Marco E.},
  title   = {Co-positive Lyapunov functions for the stabilization of positive switched systems},
  journal = {Automatica},
  year    = {2020},
  volume  = {113},
  pages   = {108770},
  doi     = {10.1016/j.automatica.2020.108770}
}

@article{Rantzer2015,
  author  = {Rantzer, Anders},
  title   = {Scalable control of positive systems},
  journal = {European Journal of Control},
  year    = {2015},
  volume  = {24},
  pages   = {72--80},
  doi     = {10.1016/j.ejcon.2014.12.005}
}

@inproceedings{PapageorgiouHajSlSalemBlosseville1991,
  author    = {Papageorgiou, Markos and Haj-Salem, Habib and Blosseville, Jean-Marc},
  title     = {ALINEA: A local feedback control law for on-ramp metering},
  booktitle = {Transportation Research Record},
  year      = {1991},
  number    = {1320},
  pages     = {58--64}
}

@article{PapageorgiouReview2018,
  author  = {Papageorgiou, Markos and Papamichail, Ioannis and Messmer, Alexander and Wang, Yibing},
  title   = {Traffic-responsive linked ramp-metering control},
  journal = {Annual Reviews in Control},
  year    = {2018},
  volume  = {45},
  pages   = {278--305},
  doi     = {10.1016/j.arcontrol.2018.08.001}
}

@article{RamirezVanDerSchaftMaschke2013,
  author  = {Ramirez, Hector and van der Schaft, Arjan J. and Maschke, Bernhard M.},
  title   = {Irreversible port-Hamiltonian systems: A generalization of energy-based modeling to nonequilibrium thermodynamics},
  journal = {Chemical Engineering Science},
  year    = {2013},
  volume  = {92},
  pages   = {171--185},
  doi     = {10.1016/j.ces.2013.01.038}
}

@article{sontag1989universal,
  title={A ‘universal’construction of Artstein's theorem on nonlinear stabilization},
  author={Sontag, Eduardo D},
  journal={Systems \& control letters},
  volume={13},
  number={2},
  pages={117--123},
  year={1989},
  publisher={Elsevier}
}

@ARTICLE{661589,
  author={Krstic, M. and Zhong-Hua Li},
  journal={IEEE Transactions on Automatic Control}, 
  title={Inverse optimal design of input-to-state stabilizing nonlinear controllers}, 
  year={1998},
  volume={43},
  number={3},
  pages={336-350},
  doi={10.1109/9.661589}}

@article{10.1115/1.3653115,
    author = {Kalman, R. E.},
    title = {When Is a Linear Control System Optimal?},
    journal = {Journal of Basic Engineering},
    volume = {86},
    number = {1},
    pages = {51-60},
    year = {1964},
    month = {03},
    issn = {0021-9223},
    doi = {10.1115/1.3653115},
    url = {https://doi.org/10.1115/1.3653115},
    eprint = {https://asmedigitalcollection.asme.org/fluidsengineering/article-pdf/86/1/51/5763755/51_1.pdf},
}

@article{KrsticAsymmetryDemystified,
  author       = {Miroslav Krsti{\'c}},
  title        = {Asymmetry Demystified: Strict CLFs and Feedbacks for Predator–Prey Interconnections},
  journal      = {IEEE Transactions on Automatic Control},
  note         = {Under review},
  year         = {2026},
  url          = {https://arxiv.org/abs/2602.21594},
}

@book{Krstic1998Stabilization,
  author    = {Miroslav Krsti\'{c} and Hua Deng},
  title     = {Stabilization of Nonlinear Uncertain Systems},
  publisher = {Springer},
  series    = {Communications and Control Engineering},
  year      = {1998},
  isbn      = {9781852330200},
  address   = {London, UK}
}

@article{LinSontag1995,
  author       = {Yuandan Lin and Eduardo D. Sontag},
  title        = {Control-Lyapunov Universal Formulas for Restricted Inputs},
  journal      = {Control: Theory and Advanced Technology},
  volume       = {10},
  number       = {4, part 5},
  pages        = {1981--2004},
  year         = {1995},
  publisher    = {World Scientific},
  doi          = {},
  note         = {This article generalizes universal CLF formulas to systems with bounded and/or positive controls.}
}

@INPROCEEDINGS{Steentjes2018Feedback,
  author={Steentjes, Tom R. V. and Doban, Alina and Lazar, Mircea},
  booktitle={2018 17th European Control Conference (ECC)},
  title={Feedback stabilization of positive nonlinear systems with applications to biological systems},
  year={2018},
  pages={2684-2689},
  doi={10.23919/ECC.2018.8550371},
  organization={IEEE}
}

@ARTICLE{9740520,
  author={Ito, Hiroshi},
  journal={IEEE Transactions on Automatic Control}, 
  title={Strict Smooth Lyapunov Functions and Vaccination Control of the SIR Model Certified by ISS}, 
  year={2022},
  volume={67},
  number={9},
  pages={4514-4528},
  doi={10.1109/TAC.2022.3161395}}

@ARTICLE{11080060,
  author={Veil, Carina and Krstić, Miroslav and Karafyllis, Iasson and Diagne, Mamadou and Sawodny, Oliver},
  journal={IEEE Transactions on Automatic Control}, 
  title={Stabilization of Predator–Prey Age-Structured Hyperbolic PDE When Harvesting Both Species is Inevitable}, 
  year={2026},
  volume={71},
  number={1},
  pages={123-138},
  doi={10.1109/TAC.2025.3589108}}

@article{malisoff2016stabilization,
  title={Stabilization and robustness analysis for a chain of exponential integrators using strict Lyapunov functions},
  author={Malisoff, Michael and Krstic, Miroslav},
  journal={Automatica},
  volume={68},
  pages={184--193},
  year={2016},
  publisher={Elsevier}
}
\bibliographystyle{IEEEtranS}

\end{document}